\newtheorem{theorem}{Theorem}[section]
\newtheorem{corollary}{Corollary}[section]
\newtheorem{remark}{Remark}[section]
\newcommand{\R}{\mathbb{R}}
\newcommand{\C}{\mathbb{C}}
\begin{document}

\begin{flushleft}
\Large 
\noindent{\bf \Large Inverse parameter and shape problem for an isotropic scatterer with two conductivity coefficients}
\end{flushleft}

\vspace{0.2in}

{\bf  \large Rafael Ceja Ayala and Isaac Harris\footnote{Orcid:0000-0002-9648-6476 and corresponding author}}\\
\indent {\small Department of Mathematics, Purdue University, West Lafayette, IN 47907 }\\
\indent {\small Email:  \texttt{rcejaaya@purdue.edu}  and \texttt{harri814@purdue.edu} }\\

{\bf  \large Andreas Kleefeld\footnote{Orcid:0000-0001-8324-821X}}\\
\indent {\small Forschungszentrum J\"{u}lich GmbH, J\"{u}lich Supercomputing Centre, } \\
\indent {\small Wilhelm-Johnen-Stra{\ss}e, 52425 J\"{u}lich, Germany}\\
\indent {\small University of Applied Sciences Aachen, Faculty of Medical Engineering and } \\
\indent {\small Technomathematics, Heinrich-Mu\ss{}mann-Str. 1, 52428 J\"{u}lich, Germany}\\
\indent {\small Email: \texttt{a.kleefeld@fz-juelich.de}}\\


\begin{abstract}
\noindent In this paper, we consider the direct and inverse problem for isotropic scatterers with two conductive boundary conditions. First, we show the uniqueness for recovering the coefficients from the known far-field data at a fixed incident direction for multiple frequencies. Then, we address the inverse shape problem for recovering the scatterer for the measured far-field data at a fixed frequency. Furthermore, we examine the direct sampling method for recovering the scatterer by studying the factorization for the far-field operator. The direct sampling method is stable with respect to noisy data and valid in two dimensions for partial aperture data. The theoretical results are verified with numerical examples to analyze the performance by the direct sampling method. 
\end{abstract}

\section{Introduction}
In this paper, we investigate the inverse isotropic scattering problem, which involves determining an inhomogeneous medium from measurements obtained far from the scatterer. The problem focuses on two conductivity parameters, denoted as $\eta$ and $\lambda$, previously explored in \cite{te-2cbc} from the perspective of a transmission eigenvalue problem. These physical parameters model an object with a thin layer covering the exterior, such as an aluminum sheet. Assuming a fixed wave number, we work with the measured far-field pattern and employ qualitative reconstruction methods to recover the unknown scatterer(s). Qualitative reconstruction methods, widely used in various inverse scattering problems \cite{ammari1,GLSM,fmconductbc,harris-nguyen,w/Jake,RTM-gibc,Liu,nf-meng,phong1,GLSM2}, require minimal a priori information about the scatterer, making them advantageous for nondestructive testing in fields such as medical imaging and engineering.

Our approach extends the application of the Direct Sampling Method (DSM) in order to reconstruct scatterers with two conductivity parameters. Here, we assume to have access to the far-field operator, i.e., the measured far-field pattern for all sources and receivers along the unit circle/sphere. The paper emphasizes the application of the DSM for solving inverse shape problems in scattering theory, as evidenced in related works \cite{BS1,BS2,nf-harris,Ito2012,Ito2013,Kang1,Kang2,Kang3,liem}. Additionally, we address a limited aperture problem using the DSM, considering a partial number of incident and/or observation directions in $\mathbb{R}^2$. Despite having limited data, we demonstrate that accurate scatterer recovery is achievable with partial information along the incident and/or observation directions, which is a common problem in engineering and medical imaging scenarios where data may be restricted.

The final component of our study focuses on the unique recovery of physical parameters $n$ and $\eta$ for the inverse problem and where a fixed boundary parameter $\lambda$ is considered. While uniqueness studies are typically challenging, we establish a connection from the inverse problem to a transmission eigenvalue problem with respect to two conductivity parameters. With the help of the transmission eigenvalue problem, one is able to establish discreteness to aid the uniqueness recovery of the physical parameters $n$ and $\eta$ in the presence of fixed boundary parameter $\lambda$. This discreteness argument plays a crucial role in proving the uniqueness result for the physical parameters $\eta$ and $n$.

The paper is organized as follows: the next section outlines the direct and inverse problems considered, detailing the scattering by an isotropic scatterer with two conductive boundary conditions. Section 3 delves into the uniqueness problem of a variable refractive index $n$ and complex constant conductivity parameter $\eta$, exploring the discreteness of the transmission eigenvalue problem. In Section 4, the factorization of the far-field operator is discussed, enabling the reconstruction of the medium with two boundary parameters for both full and limited apertures. Finally, numerical examples based on the DSM are presented in the last section.

\section{The direct scattering problem}\label{direct-prob-LDSM}
In this section, we formulate the direct scattering problem in $\mathbb{R}^d$ for isotropic scatterers with two conductive boundary conditions where $d=2,3$. Here, we assume that an incident plane wave $u^i(x, \hat{y})=\text{e}^{\text{i}kx\cdot\hat{y}}$ is used to illuminate the scatterer for an incident direction $\hat{y}\in \mathbb{S}^{d-1}$(i.e. unit circle/sphere) and point $x\in \mathbb{R}^d$. We will assume the scatterer $D \subset \mathbb{R}^d$ has a boundary that is $C^2$ where $\nu$ denotes the unit outward normal vector to $\partial D$.  Notice, that the incident field $u^i$ satisfies the Helmholtz equation in all of $\R^d$. The interaction of the incident field and the scatterer produces the radiating scattered field $u^s(x,\hat{y})\in H^1_{loc}(\mathbb{R}^d)$ that satisfies 
\begin{align}
\Delta u^s+k^2 n(x)u^s=k^2 \big(1-n(x)\big) u^i \quad & \text{in} \hspace{.2cm}  \mathbb{R}^d \backslash \partial D \label{direct1}\\
u^s_{-} - u_{+}^s=0  \quad \text{ and } \quad \lambda \partial_\nu \big(u^s_{-} +u^i\big) = \eta(x) \big(u_{+}^s+u^i\big) +  \partial_\nu \big(u_{+}^s+u^i\big)  \quad & \text{on} \hspace{.2cm} \partial D \label{direct2}
\end{align}
along with the Sommerfeld radiation condition 
\begin{align}
{\partial_r u^s} - \text{i} ku^s =\mathcal{O} \left( \frac{1}{ r^{(d+1)/2} }\right) \quad \text{ as } \quad r \rightarrow \infty. \label{direct3}
\end{align}
We let  $\partial_\nu \phi:= \nu \cdot \nabla \phi$ for any $\phi$ and $r:=|x|$. The radiation condition \eqref{direct3} is satisfied uniformly in all directions $\hat{x}:=x/|x|$ where $k>0$ denotes the wave number.  Here $-$ and $+$ corresponds to taking the trace from the interior or exterior of $D$, respectively. We also note that $u^i$ and its normal derivative are continuous across the boundary of $\partial D$. 

We assume that the refractive index $n \in L^{\infty}(D)$ satisfies that supp$(n-1)=D$ and the conductivity $\eta \in L^{\infty}(\partial D)$ where
$$Im(n(x))\geq0 \quad \text{ for a.e. $x \in  D$ } \quad \text{and} \quad Im(\eta(x))\geq 0 \quad \text{ for a.e.  $x \in \partial D.$}$$
The second boundary parameter $\lambda$ is a fixed complex-valued constant. Note, that the well-posedness of \eqref{direct1}--\eqref{direct3} was established in \cite{fmconductbc}.

Due to the fact that $u^s$ is a radiating solution to the Helmholtz equation on the exterior of $D$, similarly we have that (see for e.g. \cite{approach,IST})
$$u^s(x,\hat{y})=\gamma \frac{\text{e}^{\text{i}k|x|}}{|x|^{(d-1)/2}}\left\{u^{\infty}(\hat{x},\hat{y})+\mathcal{O}\left(\frac{1}{|x|}\right) \right\} \quad \text{as} \quad |x| \longrightarrow \infty$$
uniformly with respect to $\hat{x}$ where   
$$ \gamma=\frac{\text{e}^{\text{i}\pi/4}}{\sqrt{8\pi k}}\hspace{.3cm}\text{in}\hspace{.3cm}\mathbb{R}^2\hspace{.3cm}\text{and}\hspace{.3cm}\gamma=\frac{1}{4\pi}\hspace{.3cm}\text{in}\hspace{.3cm}\mathbb{R}^3.$$
The function $u^{\infty}(\hat{x},\hat{y})$ denotes the far-field pattern of the scattered field for observation direction $\hat{x}$ and incident direction $\hat{y}$ on $\mathbb{S}^{d-1}$ as well as the wave number $k>0$. We can now define the far-field operator denoted $F$ given by 
\begin{equation}\label{ff-operator}
(Fg)(\hat{x})=\int_{\mathbb{S}^{d-1}}u^{\infty}(\hat{x},\hat{y})g(\hat{y})\, \text{d}s(\hat{y}) \quad \text{ for } \quad g\in L^2(\mathbb{S}^{d-1})
\end{equation}
mapping $L^2(\mathbb{S}^{d-1})$ into itself.

We are interested in using the known far-field pattern to study the inverse shape and parameter problems. To this end, we first need to show that $u^s$ along with \eqref{direct1}--\eqref{direct3} satisfies a Lippmann-Schwinger integral equation. The radiating fundamental solution for the Helmholtz equation is given by 
\begin{equation}\label{funsolu}
\Phi_{k}(x,y)=
    \begin{cases}
        \frac{\text{i}}{4} H_0^{(1)}(k|x-y|) & d=2\\
        \frac{\text{e}^{\text{i}k|x-y|}}{4\pi|x-y|} & d=3
    \end{cases}
\end{equation}
where $H^{(1)}_0$ denotes the first kind Hankel function of order zero. Using Green's 2nd Theorem when $x$ is in the interior of $D$ gives that 
\begin{align*}
   u^s(x)\chi_D&=-\int_D \Phi_{k}(x,z) [\Delta u^s(z)+k^2 u^s(z)] \, \text{d}z\\
&\hspace{0.5in}+\int_{\partial D} \Phi_{k}(x,z)\partial_\nu u^s_-(z)-u^s_-(z)\partial_\nu \Phi_{k}(x,z)\, \text{d}s(z)
\end{align*}
where $\chi_D$ is the indicator function on the scatterer $D$. In a similar manner, using Green's 2nd Theorem when $x$ is in the exterior of $D$  gives that
\begin{align*}
    u^s(x)(1-\chi_D)&=-\int_{\partial D} \Phi_{k}(x,z)\partial_\nu u^s_+(z)-u^s_+(z)\partial_\nu \Phi_{k}(x,z) \, \text{d}s(z)\\
    &\hspace{0.5in}+\int_{\partial B_R} \Phi_{k}(x,z)\partial_\nu u^s(z)-u^s(z)\partial_\nu \Phi_{k}(x,z)\, \text{d}s(z).
\end{align*}
where $B_R=\{ x \in \R^d \, \, : \, \, |x|<R \}$ such that $D \subset B_R$. 
Therefore, by adding the above expressions we obtain the Lippmann-Schwinger type representation of the scattered field 
\begin{align}
u^s(x)&= k^2 \int_{D} (n-1)\Phi_{k}(x,z)\big[u^s(z)+u^i(z)\big] \, \text{d}z + \int_{\partial D} \eta\Phi_{k}(x,z)\big[u^s(z)+u^i(z)\big] \, \text{d}s(z) \nonumber \\ 
& \hspace{1.5in}+\int_{\partial D}(1-\lambda)\Phi_{k}(x,z)\big[\partial_\nu u^s_{-} (z) +\partial_\nu u^i(z)  \big] \,\text{d}s(z). \label{totalfield}
\end{align}
Notice, that we have used the scattered field and fundamental solution satisfying the radiation condition \eqref{direct3} to handle the boundary integral over $\partial B_R$ by letting $R \to \infty$. Equation \eqref{totalfield} will be vital for the analysis in later sections.

\begin{section}{Uniqueness result for the coefficients }
In this section, we will prove a uniqueness result for recovering constant $n$ and $\eta$ from the far-field data for one incident direction with multi-frequency data. Recently in \cite{Uniqueone,Roland1,Roland2,uniquenessnobc,multifre,uniqueoneboun,uniqueoneboun2} the authors have considered similar inverse scattering problems with multi-frequency far-field data. Here we will assume that the scatterer $D$ and parameter $\lambda$ (also constant) are known. Therefore, we assume that we have the penetrable scatterer $(D,n_j,\eta_j,\lambda)$ for $j=1,2$ that produces the far-field data $u^{\infty}_j(\hat{x}; k)$ where the incident direction $\hat{y}$ is fixed. We will also assume that we have the far-field data for a range of wave numbers i.e. for all $k \in (k_{min} , k_{max}) \subset \R_{+}$ where $k_{min} < k_{max}$. 

In order to prove our uniqueness result, we will need to consider the corresponding transmission eigenvalue problem that arises when we assume that 
$$u^{\infty}_1(\hat{x}; k) = u^{\infty}_2(\hat{x}; k) \quad \text{for all $\hat{x} \in \mathbb{S}^{d-1}$ and $k\in (k_{min} , k_{max})$. } $$ 
Since the far-field data for the scatterers $(D,n_j,\eta_j,\lambda)$ are equal we have that by Rellich's lemma the corresponding total fields $u^{s}_1 = u^{s}_2$ for all $x \in \R^d \setminus \overline{D}$. Therefore, by \eqref{direct1}--\eqref{direct3} we have that in the scatterer $D$ the total fields satisfy 
\begin{align*}
\Delta u^{(1)}+k^2 n_1 u^{(1)} = 0 \quad \text{ and } \quad  \Delta u^{(2)}+k^2n_2 u^{(2)} = 0  \hspace{.2cm}& \text{in} \hspace{.2cm}  D \\
u^{(1)}=u^{(2)} \quad \text{ and } \quad   \lambda \left( \partial_{\nu}u^{(1)}-\partial_{\nu}u^{(2)} \right) =(\eta_1-\eta_2)u^{(2)}  \hspace{.2cm}& \text{on} \hspace{.2cm}  \partial D.
\end{align*}
Here $u^{(j)}=u_j^s+u^i$ corresponds to the total field for the scattering problem associated with the scatterers $(D,n_j,\eta_j,\lambda)$. This homogeneous system is similar to the transmission eigenvalue problem studied in \cite{te-cbc}. The main difference is the fact that there are two sets of parameters $(n_j,\eta_j)$ for $j=1,2$. We now turn our attention to proving that the set of wave numbers $k \in \C$ such that the above system has a non-trivial solution is at most discrete. 

\subsection{Discreteness  of Transmission Eigenvalue Problem} 
Now, we consider the above system under the weaker assumptions that 
$$|n_1-n_2|^{-1}\in L^{\infty}(D) \quad \text{and} \quad |\eta_1-\eta_2|^{-1}\in L^{\infty}(\partial D).$$
Notice that this assumption allows for the contrasts $n_1-n_2$ and $\eta_1-\eta_2$ to change sign in $D$ and $\partial D$, receptively. The new transmission eigenvalue problem under consideration is given by: find $k\in \mathbb{C}$ and nontrivial $(w,v)\in L^2(D)\times L^2(D)$ such that  
\begin{align}
\Delta w+k^2 n_1 w = 0 \quad \text{ and } \quad  \Delta v+k^2n_2v = 0  \hspace{.2cm}& \text{in} \hspace{.2cm}  D \label{eigproblemvariable1}\\
w=v \quad \text{ and } \quad   \lambda \partial_{\nu}w= \lambda \partial_{\nu}v+(\eta_1-\eta_2) v \label{eigproblemvariable2} \hspace{.2cm}& \text{on} \hspace{.2cm}  \partial D
\end{align}
where $w-v\in X(D)$. Here, the variational space for the difference of the eigenfunctions is defined as 
$$X(D)=H^1_0(D)\cap H^2(D)\hspace{.5cm}\text{such that}\hspace{.5cm} \| \cdot \|_{X(D)}=\|\Delta \cdot \|_{L^2(D)}.$$
By our assumption that the boundary $\partial D$ is $C^2$ gives that the $L^2(D)$ norm of the Laplacian is equivalent to the $H^2(D)$ norm in the associated Hilbert space $X(D).$

Just as in \cite{te-cbc,te-cbc2,te-cbc3} we will consider a 4th order formulation of \eqref{eigproblemvariable1}--\eqref{eigproblemvariable2}. To this end, we let $u=w-v$ which implies that 
\begin{align}
\Delta u+k^2 n_1 u = -k^2(n_1-n_2)v \quad  \hspace{.2cm}& \text{in} \hspace{.2cm}  D \label{eigproblemu1}\\
u=0 \quad \text{ and } \quad   \lambda \partial_{\nu}u=(\eta_1-\eta_2) v \label{eigproblemu2} \hspace{.2cm}& \text{on} \hspace{.2cm}  \partial D.
\end{align}
Now, we can formulate an analogous transmission eigenvalue problem with a conductive boundary condition which is written as: find $k\in \mathbb{C}$ values such that there is a nontrivial solution $u\in X(D)$ satisfying 
\begin{align}
(\Delta +k^2 n_2) \frac{1}{n_1-n_2}(\Delta u+k^2 n_1 u) = 0 \quad  \hspace{.2cm}& \text{in} \hspace{.2cm}  D \label{traceeigproblem1}\\
 \lambda \partial_{\nu}u=(\eta_1-\eta_2) v \quad & \text{on} \hspace{.2cm}  \partial D,\label{traceeigproblem2}
\end{align}
where equation \eqref{traceeigproblem2} is understood in the sense of the Trace Theorem and we have that 
$$v=-\frac{1}{k^2(n_1-n_2)}(\Delta u+k^2n_1u)\hspace{.3cm}\text{and}\hspace{.3cm}w=-\frac{1}{k^2(n_1-n_2)}(\Delta u+k^2n_2u).$$
It is clear that \eqref{eigproblemvariable1}--\eqref{eigproblemvariable2} and  \eqref{traceeigproblem1}--\eqref{traceeigproblem2} are equivalent. 
To proceed, we must come up with its respective variational formulation and exploit it. Taking a function $\phi \in X(D)$, multiply it by the conjugate of \eqref{traceeigproblem1}, and use Green's 2nd Theorem to get 
\begin{equation}
 \int_{\partial D}\frac{\lambda k^2}{\eta_1-\eta_2}\partial_\nu u \partial_\nu \overline{\phi}\,\text{d}s(z)+\int_{D}\frac{1}{n_1-n_2}(\Delta u+k^2n_1u)(\Delta \overline{\phi}+k^2n_2\overline{\phi})\,\text{d}z=0 \label{variationaltwovari}
\end{equation}
We will use \eqref{variationaltwovari} to prove discreetness for the set of eigenvalues $k \in \C$ for which there is a non-trivial solution to \eqref{eigproblemvariable1}--\eqref{eigproblemvariable2}. 

To this end, we will use a $T$--coercivity argument studied in \cite{sign-chan} as well as the Analytic Fredholm theorem (see for e.g. \cite{IST}) to prove the discreteness of the transmission eigenvalues. We say that the sesquilinear form $a(\cdot, \cdot)$ is $T$--coercive on $X(D)$ if there is an isomorphism $T:X(D)\longmapsto X(D)$ such that $a(\cdot, T\cdot)$ is a coercive sesquilinear form. Using the inf-sup condition (see for e.g. \cite{inf-sup} for more details) in this case, we have that if $a(\cdot,\cdot)$ is $T$--coercive, then $a(\cdot,\cdot)$ can be represented by a continuous invertible operator $A: X(D)\longmapsto X(D)$ where 
$$a(u,\phi)=(Au,\phi)_{X(D)}\hspace{.3cm}\text{for all }\hspace{.3cm}u,\phi\in X(D).$$
We are going to use this definition to split the variational formulation \eqref{variationaltwovari} into a compact and an invertible part. Therefore, one has that the variational formulation \eqref{variationaltwovari} is given by 
\begin{equation}
    a(u,\phi)+b_k(u,\phi)=0\hspace{.3cm}\text{for all}\hspace{.3cm} \phi\in X(D). \label{variationalab}
\end{equation}
We can show that $a(\cdot, \cdot)$ and $b_k(\cdot, \cdot)$ are both bounded on $X(D)\times X(D)$ and have the form 
\begin{equation}
    a(u,\phi)=\int_D \frac{1}{n_1-n_2}\Delta u\Delta \overline{\phi}\,\text{d}z
\end{equation}
and 
\begin{align}
b_k(u,\phi)&=k^2\int_D\frac{1}{n_1-n_2}(\overline{\phi}\Delta u+u\Delta\overline{\phi})\,\text{d}z-k^2\int_D\nabla u\cdot\nabla \overline{\phi}\,\text{d}z \nonumber \\
    &\hspace{2.5cm}+ \lambda k^2\int_{\partial D}\frac{1}{\eta_1-\eta_2}\partial_\nu u \partial_\nu \overline{\phi}\,\text{d}s(z)+k^4\int_{D}u\overline{\phi}\,\text{d}z,
\end{align}
where $b_k(u,\phi)$ is the result of using Green's 1st Theorem and adding and subtracting like terms. The boundedness of both sesquilinear forms comes from the assumptions on the physical parameters and the Trace Theorem.

Just as in \cite{te-cbc} we can conclude that $b_k(\cdot,\cdot)$ can be represented by a compact operator $ B_k: X(D)\longmapsto X(D)$. Thus we obtain that 
$$b_k(u,\phi)=(B_ku,\phi)_{X(D)}\hspace{.3cm}\text{for all }\hspace{.3cm}u,\phi\in X(D)$$
where $B_k$ clearly depends analytically on $k\in \mathbb{C}$. We can also establish the existence of a bounded linear operator $A:X(D)\longmapsto X(D)$ that represents the sesquilinear form $a(\cdot,\cdot).$ Following the analysis in \cite{sign-chan}, we define $T:X(D)\longmapsto X(D)$ such that 
\begin{equation}
    \Delta T\phi=(n_1-n_2)\Delta\phi \hspace{.3cm}\text{for all }\hspace{.3cm}\phi\in X(D) \label{Toperator}.
\end{equation}
Now that we have a hold of our operator $T,$ we aim to show that this operator is an isomorphism on the space $X(D)$. We first notice that by the well-posedness of the Poisson problem we have that for every $\phi\in X(D),$ there is the existence of $T\phi\in H^1_0(D)$ satisfying \eqref{Toperator}. Using elliptic regularity as in \cite{evansbook}, we can then conclude that $T\phi \in X(D)$. The last item we need in order to proceed with our goal is a bound under the conditions on the parameters and the definition of $T$ in \eqref{Toperator}. To this end, by \eqref{Toperator} and $|n_1-n_2|^{-1}\in L^{\infty}(D)$ we have 
$$C\|\Delta\phi\|^2_{L^2(D)}\leq \big|(\Delta T\phi, \Delta \phi)_{L^2(D)}\big|$$
where $C$ does not depend on $\phi$ but only on the contrast $n_1-n_2$. This gives that the operator $T$ is indeed coercive on the space $X(D)$ and as a consequence we have an isomorphism by appealing to the Lax-Milgram Lemma. The existence of an isomorphism in terms of the operator $T$ is the set up for our main result in terms of the uniqueness of the refractive index. Thus, we are ready to prove our discreteness result.

\begin{theorem}
    Assume that $|n_1-n_2|^{-1}\in L^{\infty}(D)$ and $|\eta_1-\eta_2|^{-1}\in L^{\infty}(\partial D)$, then the set of transmission eigenvalues in \eqref{traceeigproblem1}--\eqref{traceeigproblem2} is at most discrete.
\end{theorem}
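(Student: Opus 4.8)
The plan is to rewrite the transmission eigenvalue problem \eqref{traceeigproblem1}--\eqref{traceeigproblem2}, equivalently its variational formulation \eqref{variationaltwovari}, as an operator equation $\mathbb{A}(k)u=0$ on $X(D)$ with $\mathbb{A}(k):=A+B_k$, and then to apply the Analytic Fredholm Theorem. From the construction preceding the statement we already have that $A$ and $B_k$ are bounded on $X(D)$, that $B_k$ is compact and depends analytically (indeed polynomially) on $k\in\mathbb{C}$, and that the operator $T$ of \eqref{Toperator} is an isomorphism of $X(D)$. The theorem follows once two facts are in place: (i) $A$ is boundedly invertible, so that $\mathbb{A}(k)=A\big(I+A^{-1}B_k\big)$ is an analytic family of compact perturbations of the identity; and (ii) there is at least one $k_0\in\mathbb{C}$ for which $\mathbb{A}(k_0)$ is injective.

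For (i), I would use the $T$--coercivity already set up. Since $\Delta Tu=(n_1-n_2)\Delta u$ by \eqref{Toperator}, the form $a(u,Tu)=\int_D\frac{1}{n_1-n_2}\Delta u\,\overline{\Delta Tu}\,\text{d}z$ reduces to a coercive quantity equivalent to $\|\Delta u\|_{L^2(D)}^2=\|u\|_{X(D)}^2$, using $|n_1-n_2|^{-1}\in L^\infty(D)$ and the lower bound on $T$ obtained above; combined with $T$ being an isomorphism, the inf-sup characterisation of $T$--coercivity recalled above yields that $A:X(D)\to X(D)$ is continuously invertible.

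Step (ii) is where the actual work lies and is the main obstacle. I would choose $k_0$ on the imaginary axis, $k_0=\text{i}\tau$ with $\tau>0$ large, so that $k_0^2=-\tau^2$ and $k_0^4=\tau^4>0$, and show that the full form $a(\cdot,\cdot)+b_{k_0}(\cdot,\cdot)$ is $T$--coercive, hence $\mathbb{A}(k_0)$ is invertible and in particular injective. Testing with $Tu$, the sign choices make $\|\Delta u\|_{L^2(D)}^2$, $\tau^2\|\nabla u\|_{L^2(D)}^2$ and $\tau^4\|u\|_{L^2(D)}^2$ appear with favourable signs (up to constants depending only on the contrasts), while the indefinite cross terms $k_0^2\int_D\frac{1}{n_1-n_2}\big(\overline{Tu}\,\Delta u+u\,\overline{\Delta Tu}\big)\,\text{d}z$ and the boundary term $\lambda k_0^2\int_{\partial D}\frac{1}{\eta_1-\eta_2}\,\partial_\nu u\,\overline{\partial_\nu Tu}\,\text{d}s(z)$ must be absorbed. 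The cross terms are controlled by Young's inequality against $\|\Delta u\|^2$ and $\tau^4\|u\|^2$; the boundary term is handled via the trace/interpolation estimate $\|\partial_\nu u\|_{L^2(\partial D)}^2\le\epsilon\|\Delta u\|_{L^2(D)}^2+C_\epsilon\|u\|_{L^2(D)}^2$, valid on $X(D)$ because $\partial D$ is $C^2$, first fixing $\epsilon$ small and then $\tau$ large. Crucially, the unknown sign of $\lambda$ and the indefiniteness of $\eta_1-\eta_2$ affect only the constants here, since the boundary contribution is lower order relative to $\|\Delta u\|^2$ and $\tau^4\|u\|^2$ on $X(D)$, and $|\eta_1-\eta_2|^{-1}\in L^\infty(\partial D)$ keeps that term bounded. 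The careful bookkeeping of these estimates, together with the fact that $Tu$ depends on $u$ only through the Poisson problem \eqref{Toperator}, is the technical heart of the proof; it parallels the argument in \cite{te-cbc}.

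With (i) and (ii) established, the Analytic Fredholm Theorem applied to the analytic family $k\mapsto I+A^{-1}B_k$ of compact perturbations of the identity gives the dichotomy: either $I+A^{-1}B_k$ is non-invertible for every $k\in\mathbb{C}$, or $\big(I+A^{-1}B_k\big)^{-1}$, and hence $\mathbb{A}(k)^{-1}$, exists for all $k$ outside a set with no finite accumulation point. The point $k_0$ of Step (ii) excludes the first alternative. Since $k$ is a transmission eigenvalue of \eqref{traceeigproblem1}--\eqref{traceeigproblem2} precisely when $\mathbb{A}(k)$ fails to be injective, the set of transmission eigenvalues is at most discrete.
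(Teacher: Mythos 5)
Your overall skeleton is exactly the paper's: represent \eqref{variationaltwovari} as $(A+B_k)u=0$ on $X(D)$, get invertibility of $A$ from the $T$--coercivity of $a(\cdot,\cdot)$ via \eqref{Toperator}, note $B_k$ is compact and analytic in $k$, and invoke the Analytic Fredholm Theorem once injectivity holds at a single point. The gap is in your step (ii). You propose to establish injectivity at $k_0=\mathrm{i}\tau$, $\tau$ large, by proving $T$--coercivity of the full form $a+b_{k_0}$, claiming that $\|\Delta u\|^2_{L^2(D)}$, $\tau^2\|\nabla u\|^2_{L^2(D)}$ and $\tau^4\|u\|^2_{L^2(D)}$ appear with favourable signs. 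They do not: testing $b_{k_0}$ against $Tu$ produces $\tau^2\int_D\nabla u\cdot\nabla\overline{Tu}\,\mathrm{d}z$ and $\tau^4\int_D u\,\overline{Tu}\,\mathrm{d}z$, and since $Tu\neq u$ (it is defined only through the Poisson problem $\Delta Tu=(n_1-n_2)\Delta u$, with a contrast that is allowed to change sign) neither of these quantities is real or sign-definite; the same is true of the cross terms, where $\Delta\overline{Tu}=\overline{(n_1-n_2)\Delta u}$ brings in the phase of the complex contrast, and of the boundary term with complex $\lambda$ and sign-changing $\eta_1-\eta_2$. Young's inequality and the trace estimate cannot absorb indefinite terms of the same order ($\tau^4\|u\|^2$, $\tau^2\|\nabla u\|^2$) into good terms that are simply not there, so the coercivity you want at large imaginary $k_0$ would require structural sign assumptions on the contrasts that the theorem deliberately avoids.

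The paper sidesteps this entirely with a much cheaper observation: every term of $b_k(\cdot,\cdot)$ carries at least a factor $k^2$, so $B_0$ is the zero operator and $A+B_0=A$ is injective because $A$ is invertible by the $T$--coercivity you already established in step (i). Since $\mathbb{C}$ is connected and $k\mapsto B_k$ is analytic (polynomial), the Analytic Fredholm Theorem applied at this single point $k_0=0$ already yields discreteness. So your steps (i) and (iii) are fine and match the paper; replace the heavy and, as written, unjustified coercivity argument in step (ii) by the choice $k_0=0$, and the proof closes.
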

\begin{proof}
    As mentioned before, we will appeal to the Analytic Fredholm Theorem to prove this result. We will use the definition of $T$ in \eqref{Toperator} and make the following observation 
    $$a(u,Tu)=\int_D\frac{1}{n_1-n_2}\Delta u\Delta\overline{Tu}\,\text{d}z=\|\Delta u\|^2_{L^2(D)}\hspace{.3cm}\text{for all}\hspace{.3cm}u\in X(D).$$
    As a consequence of this observation, we have that $a(\cdot,\cdot)$ is $T$--coercive on $X(D)$ and thus $a(\cdot,\cdot)$ can be represented by an invertible operator. By \eqref{variationalab}, we have that $k\in\mathbb{C}$ is a transmission eigenvalue if and only if $A+B_k$ is not injective. We know that $A$ is invertible and $B_k$ is compact by definition and this implies that $A+B_k$ is Fredholm with index zero. In our case, we denote $B_0$ to be the zero operator where $k=0.$ Then at $k=0$, we have that $A+B_0 = A$ is injective and by the Analytic Fredholm Theorem there is at most a discrete set of values in $\mathbb{C}$ where $A+B_k$ fails to be injective. This proves the claim. 
\end{proof}

\subsection{Uniqueness for two of the material parameters}
In this section, we present a uniqueness result in determining a constant refractive index as well as conductivity parameter by using the fixed incident direction far-field pattern. In the previous section, we proved that the set of transmission eigenvalues is discrete with respect to our problem. We will be able to relate such result to the study on uniqueness of the inverse scattering problem. Moreover, we will be able to relate it to the interior transmission eigenvalue problem dealing with two conductivity parameters. This means that we will consider the transmission eigenvalue problem with two conductivity parameters and establish a relationship to the uniqueness of the variable refractive index and complex constant conductivity parameter.
\begin{theorem}
    Let $u^{\infty}_j(\hat{x}; k)$ be the far-field patterns of the scattering field $u^s_j$ to the problem \eqref{direct1}--\eqref{direct3} with respect to the scatterer $(D,n_j,\eta_j,\lambda)$ for $j=1,2.$ Assuming that  
    $$|n_1-n_2|^{-1}\in L^{\infty}(D) \quad \text{and} \quad |\eta_1-\eta_2|^{-1}\in L^{\infty}(\partial D)$$ 
    with $\lambda\neq 0$ a fixed constant, then $u_1^{\infty}(\hat{x}; k)\neq u_2^{\infty}(\hat{x}; k)$ for all $\hat{x}\in \mathbb{S}^{d-1}$ and $k\in (k_{min},k_{max})$.
\end{theorem}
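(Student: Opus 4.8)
The plan is to argue by contradiction, converting equality of the far-field data into a statement that contradicts the discreteness result proved above. Suppose that $u^{\infty}_1(\hat{x}; k)=u^{\infty}_2(\hat{x}; k)$ for every $\hat{x}\in\mathbb{S}^{d-1}$ and every $k\in(k_{min},k_{max})$, and fix such a $k$. The difference $u^s_1(\cdot\,;k)-u^s_2(\cdot\,;k)$ is a radiating solution of the Helmholtz equation with wave number $k$ in $\R^d\setminus\overline{D}$ whose far-field pattern vanishes, so Rellich's lemma together with unique continuation gives $u^s_1(\cdot\,;k)=u^s_2(\cdot\,;k)$ in $\R^d\setminus\overline{D}$. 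In particular the exterior Cauchy data of $u^s_1$ and $u^s_2$ on $\partial D$ coincide, and substituting this into the conductive transmission conditions \eqref{direct2} for each scatterer produces, exactly as in the computation preceding \eqref{eigproblemvariable1}--\eqref{eigproblemvariable2}, the system \eqref{eigproblemvariable1}--\eqref{eigproblemvariable2} for $w:=u^{(1)}\big|_D$ and $v:=u^{(2)}\big|_D$. Since $u^{(1)}=u^{(2)}$ on $\partial D$ we have $w-v\in H^1_0(D)$, and since $v\in L^2(D)$, $w-v\in H^1(D)$ and the boundary is $C^2$, elliptic regularity applied to $\Delta(w-v)+k^2 n_1(w-v)=-k^2(n_1-n_2)v$ gives $w-v\in H^2(D)$, hence $w-v\in X(D)$; by the equivalence noted after \eqref{traceeigproblem2}, $u:=w-v$ solves \eqref{traceeigproblem1}--\eqref{traceeigproblem2}.

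The crucial point, and the step I expect to cause the most trouble, is to verify that $u$ is nontrivial, so that $k$ is a genuine transmission eigenvalue. It suffices to show $v\not\equiv 0$ in $D$: if $u=w-v\equiv 0$ then $w=v$, and subtracting the two equations in \eqref{eigproblemvariable1} gives $k^2(n_1-n_2)v=0$, which forces $v\equiv 0$ because $|n_1-n_2|^{-1}\in L^{\infty}(D)$ and $k\neq 0$. Now suppose, for contradiction, that $v\equiv 0$, i.e.\ $u^s_2=-u^i$ in $D$. Then the interior traces satisfy $u^s_{2,-}=-u^i$ and $\partial_\nu u^s_{2,-}=-\partial_\nu u^i$ on $\partial D$; feeding these into \eqref{direct2} (using $\eta_2\in L^{\infty}(\partial D)$) forces $u^s_{2,+}=-u^i$ and $\partial_\nu u^s_{2,+}=-\partial_\nu u^i$ on $\partial D$ as well. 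Thus $u^s_2$ and $-u^i$ solve the Helmholtz equation in $\R^d\setminus\overline{D}$ with the same Cauchy data on $\partial D$, so by Holmgren's theorem and unique continuation $u^s_2=-u^i$ throughout $\R^d\setminus\overline{D}$; but $u^s_2$ satisfies the Sommerfeld radiation condition \eqref{direct3} while the plane wave $u^i$ does not, a contradiction. Hence $v\not\equiv 0$, and therefore $u=w-v$ is a nontrivial solution of \eqref{traceeigproblem1}--\eqref{traceeigproblem2}, so $k$ is a transmission eigenvalue.

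Since $k\in(k_{min},k_{max})$ was arbitrary, every point of the open interval $(k_{min},k_{max})$ is a transmission eigenvalue of \eqref{traceeigproblem1}--\eqref{traceeigproblem2}. Under the hypotheses $|n_1-n_2|^{-1}\in L^{\infty}(D)$ and $|\eta_1-\eta_2|^{-1}\in L^{\infty}(\partial D)$, the discreteness result proved above (Theorem 3.1) shows this eigenvalue set is at most discrete in $\C$; but $(k_{min},k_{max})$ has no isolated points and so cannot be contained in a discrete set. This contradiction shows that $u^{\infty}_1(\hat{x}; k)$ and $u^{\infty}_2(\hat{x}; k)$ cannot agree for all $\hat{x}\in\mathbb{S}^{d-1}$ and all $k\in(k_{min},k_{max})$, which proves the theorem. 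The places most likely to need care are the derivation of the transmission conditions and the $H^2$ regularity of $w-v$ (both hinging on $\lambda\neq 0$ and the $C^2$ boundary), and, above all, the nontriviality argument via the radiation condition.
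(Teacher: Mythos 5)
Your proof is correct and follows essentially the same route as the paper: contradiction via Rellich's lemma, reduction of the equal-data assumption to the transmission eigenvalue system \eqref{eigproblemvariable1}--\eqref{eigproblemvariable2}, invocation of the discreteness theorem, and the radiation-condition argument ($u^s=-u^i$ forces an entire radiating plane wave) to rule out triviality. The only difference is organizational: the paper picks a single $k_0\in(k_{min},k_{max})$ that is not an eigenvalue and derives the contradiction there, whereas you show every $k$ in the interval would be an eigenvalue and contradict discreteness directly -- logically equivalent uses of the same ingredients.
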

\begin{proof}
    We proceed by contradiction and assume that $u_1^{\infty}(\hat{x}; k)= u_2^{\infty}(\hat{x}; k)$ for all $\hat{x}\in \mathbb{S}^{d-1}$ and $k\in (k_{min},k_{max})$. Therefore, by Rellich's Lemma $u_1^s=u^s_2$ for all $x\in\mathbb{R}^d\setminus \overline{D}$. As a consequence we have that  
$$u_{+}^{(1)}=u_{+}^{(2)}, \quad \partial_{\nu}u_{+}^{(1)}=\partial_{\nu}u_{+}^{(2)},\quad \text{which implies that} \quad u_{-}^{(1)}=u_{-}^{(2)} \quad \text{on $\partial D.$}$$ 
Therefore, by the conductivity condition we have that
    $$ \lambda \partial_{\nu}u^{(1)}-\lambda \partial_{\nu}u^{(2)}=(\eta_1-\eta_2)u^{(2)}\hspace{.3cm}\text{on}\hspace{.3cm}\partial D\hspace{.3cm}\text{with}\hspace{.3cm}\lambda\neq 0.$$
    It is clear to see that $(u^{(1)},u^{(2)})\in H^1(D)\times H^1(D)$ satisfies 
\begin{align}
\Delta u^{(1)}+k^2 n_1 u^{(1)} = 0 \quad \text{ and } \quad  \Delta u^{(2)}+k^2n_2u^{(2)} = 0  \hspace{.2cm}& \text{in} \hspace{.2cm}  D \\
u^{(1)}=u^{(2)} \quad \text{ and }  \lambda \partial_{\nu}u^{(1)}-\lambda \partial_{\nu}u^{(2)} = (\eta_1-\eta_2)u^{(2)} \quad   \  \hspace{.2cm}& \text{on} \hspace{.2cm}  \partial D
\end{align}
where $\lambda\neq 0$ for all $k \in (k_{min},k_{max})$.

Now we appeal to our transmission eigenvalue problem \eqref{eigproblemvariable1}--\eqref{eigproblemvariable2} where we have shown that the set of values is at most discrete. Notice, the discreteness of the transmission eigenvalues implies that there exists $k_0\in (k_{min},k_{max})$ such that $k_0$ is not a transmission eigenvalue. This implies that $u^{(1)}=u^{(2)}=0$ in $D$ and by using the boundary conditions of the problem we have that $u^{(1)} = \partial_\nu u^{(1)} = 0 \hspace{.3cm}\text{on}\hspace{.3cm}\partial D.$
Thus by unique continuation for the Helmholtz equation we have that $u_1^s=-u^i$ in  $\mathbb{R}^d\setminus\overline{D}$. This says that $u^i$ is an entire radiating solution to the Helmholtz equation in $\R^d$. This is a contradiction since this would imply that $u^i = 0$ in $\R^d$ which is in opposition to the fact that $|u^i|=1$ in $\R^d$. This proves the claim.
\end{proof}
As a consequence of this result, we have the following uniqueness result.

\begin{corollary}
If $n, \eta \in \mathbb{C}$ are constant, then $u^{\infty}(\hat{x},\hat{y})$  for all $\hat{x}\in \mathbb{S}^{d-1}$ and $k\in (k_{min},k_{max})$ uniquely determines $n$ or $\eta.$
\end{corollary}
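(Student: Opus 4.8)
The plan is to obtain this as an immediate consequence of the preceding uniqueness theorem by contraposition, using the elementary fact that the reciprocal of the modulus of a nonzero constant is again a finite constant, hence an $L^\infty$ function. So there is really nothing to do beyond a short bookkeeping argument.

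Concretely, I would argue by contradiction. Suppose the data $\{u^\infty(\hat x,\hat y): \hat x\in\mathbb S^{d-1},\ k\in(k_{min},k_{max})\}$, with $\hat y$ the fixed incident direction and $\lambda\neq 0$ the known boundary constant, fails to determine either $n$ or $\eta$. Then there exist two admissible configurations $(D,n_1,\eta_1,\lambda)$ and $(D,n_2,\eta_2,\lambda)$ with the same $D$ and $\lambda$, producing identical far-field patterns $u_1^\infty(\hat x;k)=u_2^\infty(\hat x;k)$ for all $\hat x\in\mathbb S^{d-1}$ and all $k$ in the band, yet with $n_1\neq n_2$ \emph{and} $\eta_1\neq\eta_2$. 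Since $n_1,n_2\in\mathbb C$ are constants with $n_1\neq n_2$, the quantity $|n_1-n_2|^{-1}$ is a finite positive number, so $|n_1-n_2|^{-1}\in L^\infty(D)$; likewise $|\eta_1-\eta_2|^{-1}\in L^\infty(\partial D)$ because $\eta_1\neq\eta_2$. Hence every hypothesis of the preceding theorem is met, and that theorem gives $u_1^\infty(\hat x;k)\neq u_2^\infty(\hat x;k)$ for all $\hat x\in\mathbb S^{d-1}$ and $k\in(k_{min},k_{max})$, contradicting the assumed coincidence. Therefore $n_1=n_2$ or $\eta_1=\eta_2$, which is exactly the claimed uniqueness of $n$ or $\eta$.

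I do not expect any genuine obstacle here, since the entire analytic content is carried by the preceding theorem (and, behind it, the discreteness of the transmission eigenvalue problem \eqref{traceeigproblem1}--\eqref{traceeigproblem2}). The one point requiring care is the logical structure: the theorem forces a contradiction precisely when \emph{both} contrasts $n_1-n_2$ and $\eta_1-\eta_2$ are nonzero, so the natural conclusion is the disjunction ``$n$ or $\eta$'' rather than joint recovery of the pair. Upgrading to simultaneous uniqueness would need an extra ingredient to dispose of the degenerate cases $n_1=n_2,\ \eta_1\neq\eta_2$ and $n_1\neq n_2,\ \eta_1=\eta_2$ — for instance, revisiting the conductive relation \eqref{direct2} once one of the two parameters is already known to agree — but that is beyond what the corollary asserts.
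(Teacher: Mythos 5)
Your proposal is correct and follows exactly the route the paper intends: the corollary is stated as an immediate consequence of the preceding theorem, and your contrapositive argument — constants with $n_1\neq n_2$ and $\eta_1\neq\eta_2$ automatically satisfy $|n_1-n_2|^{-1}\in L^\infty(D)$ and $|\eta_1-\eta_2|^{-1}\in L^\infty(\partial D)$, so equal far-field data would contradict the theorem — is precisely the bookkeeping the paper leaves implicit. Your closing remark about the disjunctive (rather than joint) nature of the conclusion is also consistent with how the corollary is phrased.
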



We have shown the uniqueness of the refractive index variable $n$ and complex constant $\eta$ for the inverse problem with two conductivity parameters. Now, we will analyze the scattered field $u^s(x)$ with two conductivity parameters and reconstruct some regions. 

\section{Direct Sampling Method}\label{factor-LDSM}
In this section, we study the Direct Sampling Method (DSM) to solve the problem for the reconstruction of isotropic scatterers. First, we analyze the problem using a full aperture and then a partial aperture. For the partial aperture, this means that we are either limiting the number of observation or/and incident directions. In either case, the Lippmann--Schwinger representation of the scattered field \eqref{totalfield} will be used in the analysis. We will derive a new factorization of the far-field operator defined in \eqref{ff-operator} which is one of the main components of our analysis. We will prove that the new proposed imaging function has the property that it decays as the sampling point moves away from the scatterer.
\subsection{Factorization of the Scattered Field with Full Aperture}

We begin by factorizing the far-field operator defined in \eqref{ff-operator} which will allow us to define an imaging function to facilitate the reconstruction of extended regions $D$. Recall, that the far-field operator for $g\in L^2(\mathbb{S}^{d-1})$ is given by 
$$(Fg)(\hat{x})=\int_{\mathbb{S}^{d-1}}u^\infty(\hat{x},\hat{y})g(\hat{y})\,\text{d}s(\hat{y})$$
where $\mathbb{S}^{d-1}$ is the unit sphere/circle. Since it is well-known that the far-field pattern is analytic (see for e.g. \cite{coltonkress}) it is clear that $F$ is a compact operator. 
 The factorization of the far-field operator was initially studied \cite{Liu} for the case when $\eta=0$ and $\lambda=1$ and in \cite{fmconductbc} it was studied for the case where $\lambda=1$ and $\eta\neq 0$.  Now, by the  Lippmann--Schwinger representation of the scattered field given in \eqref{totalfield} we have that  
$$u^\infty(\hat{x},\hat{y})=\int_{D}k^2(n-1)[u^s+u^i] \text{e}^{-\text{i}k\hat{x}\cdot z}\, \text{d}z+\int_{\partial D}\eta[u^i+u^s]\text{e}^{-\text{i}k\hat{x}\cdot z}\, \text{d}s(z)$$
$$ +\int_{\partial D}(1-\lambda)\big[\partial_\nu u^s_{-} +\partial_\nu u^i \big]\text{e}^{-\text{i}k\hat{x}\cdot z}\, \text{d}s(z).$$
Using the above formula for the far-field pattern, we can change the order of integration to obtain the following identity 
$$Fg=\int_{D}k^2(n-1)[u_g^s+v_g] \text{e}^{-\text{i}k\hat{x}\cdot z} \, \text{d}z+\int_{\partial D}\eta[u_{g,-}^s+v_g] \text{e}^{-\text{i}k\hat{x}\cdot z}\, \text{d}s(z)$$
$$+\int_{\partial D}(1-\lambda)[\partial_\nu u_{g,-}^s+\partial_\nu v_g] \text{e}^{-\text{i}k\hat{x}\cdot z}\, \text{d}s(z)$$
Here, we let $v_g(x)$ denote the Herglotz wave function defined as  
$$v_g (x) =\int_{\mathbb{S}^{d-1}} \text{e}^{\text{i}k{x}\cdot \hat{y}}g(\hat{y}) \,  \text{d}s(\hat{y}) \quad \text{ and } \quad u_g^{s} (x)=\int_{\mathbb{S}^{d-1}} u^s(x,\hat{y})g(\hat{y}) \, \text{d}s(\hat{y})$$
where $u_g^s$ solves the boundary value problem \eqref{direct1}--\eqref{direct3} when the incident field $u^i=v_g$. The factorization method for the far-field operator $F$ is based on factorizing $F$ into three distinct pieces that act together and give us more information about the region of interest $D$. To this end, one can show that  
\begin{equation} \label{Hdef}
H:L^2(\mathbb{S}^{d-1})\longrightarrow L^2(D)\times L^2(\partial D) \times H^{-\frac{1}{2}}(\partial D)
\end{equation}
where $ Hg=(v_g|_{D} \, ,\, v_g|_{\partial D}\, ,\, \partial_{\nu}v_g|_{\partial D})^{\top}$
is a bounded linear operator. Now, we consider the following auxiliary problem  
\begin{align}
(\Delta +k^2 n)w=-k^2(n-1)f \hspace{.2cm}& \text{in} \hspace{.2cm}  \mathbb{R}^d\setminus \partial D\label{aux1}\\
w_{-} - w_{+}=0 \quad  \text{and } \quad \partial_{\nu}w_{-} - \partial_\nu w_{+}= -\eta(w+h) -(1-\lambda)(\partial_{\nu}w_{-}+j)  \hspace{.2cm} &\text{on }  \hspace{.2cm} \partial D\label{aux2}
\end{align}
with $f\in L^2(D)$, $h\in L^2(\partial D),$ and $j\in H^{-\frac{1}{2}}(\partial D)$. It is clear that the auxiliary problem \eqref{aux1}--\eqref{aux2} along with the radiation condition is well-posed by \cite{fmconductbc} with $w \in H^1_{loc}(\R^d)$ under the assumptions of this paper. We can define the operator $T$ associated with the auxiliary problem \eqref{aux1}--\eqref{aux2} such that  
$$T:L^2(D)\times L^2(\partial D)\times H^{-\frac{1}{2}}(\partial D)\longrightarrow L^2(D)\times L^2(\partial D)\times H^{-\frac{1}{2}}(\partial D)$$ 
which is given by   
\begin{equation}
    T(f,h,j)^{\top}=\left(k^2(n-1)(w+f)|_{D} \, , \, \eta(w+h)|_{\partial D}\, , \, (1-\lambda)(\partial_{\nu}w_{-}+j)\right)^{\top}. \label{defT}
\end{equation}
Similarly, we have that  $T$ is a bounded and linear operator. Due to the fact that $u_g^s$ solves \eqref{aux1}--\eqref{aux2} with $f=v_g |_{D}$, $h = v_g |_{\partial D}$, and $j=\partial_{\nu}v_g |_{\partial D}$ we have that 
$$THg = \left(k^2(n-1)(u^s_g+v_g)|_{D} \, , \, \eta(u^s_g+v_g)|_{\partial D}\, , \, (1-\lambda)(\partial_{\nu}u_{g,-}^s+\partial_{\nu}v_g)|_{\partial D}\right)^{\top}$$
for any $g \in L^2(\mathbb{S}^{d-1})$. In order to determine a suitable factorization of the far-field operator $F$, we need to compute the adjoint of the operator $H$. It is clear (see for e.g. \cite{dm-1cbclandweber}) that for $(\varphi_1,\varphi_2,\varphi_3)^{\top} \in L^2(D)\times L^2(\partial D)\times H^{\frac{1}{2}}(\partial D)$ that the adjoint operator is given by
$$H^*(\varphi_1,\varphi_2,\varphi_3)^{\top}=\int_D\varphi_1 (z) \text{e}^{-\text{i}k\hat{x}\cdot z}\, \text{d}z+\int_{\partial D}\varphi_2 (z)  \text{e}^{-\text{i}k\hat{x}\cdot z}\, \text{d}s(z)+\int_{\partial D}\varphi_3 (z)  \text{e}^{-\text{i}k\hat{x}\cdot z}\, \text{d}s(z).$$
We then obtain
$$H^*THg=\int_{D}k^2(n-1)[u_g^s+v_g] \text{e}^{-\text{i}k\hat{x}\cdot z} \, \text{d}z+\int_{\partial D}\eta[u_g^s+v_g] \text{e}^{-\text{i}k\hat{x}\cdot z}\, \text{d}s(z)$$
$$+\int_{\partial D}(1-\lambda)[\partial_{\nu}u_{g,-}^s+\partial_{\nu}v_g] \text{e}^{-\text{i}k\hat{x}\cdot z}\, \text{d}s(z)=Fg$$
for any $g \in L^2(\mathbb{S}^{d-1})$. Therefore, we have derived a factorization for the far-field operator. 

\begin{theorem}
The far-field operator $F:L^2(\mathbb{S}^{d-1})\longrightarrow L^2(\mathbb{S}^{d-1}) $ has the symmetric factorization $F=H^*TH$ where the operators $H$ and $T$ are defined in \eqref{Hdef} and \eqref{defT}, respectively. 
\label{Ffact}
\end{theorem}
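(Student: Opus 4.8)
The statement $F = H^*TH$ is essentially a bookkeeping identity: the far-field operator, written via the Lippmann–Schwinger representation, already has all three source terms (volume, boundary, normal-derivative) appearing as densities integrated against $\mathrm{e}^{-\mathrm{i}k\hat{x}\cdot z}$, which is exactly the action of $H^*$. So the plan is to verify this by composing the explicit formulas already assembled in the text and checking that no term is dropped or mismatched. The hard part is not the algebra but confirming the mapping properties and the adjoint computation are consistent, in particular that the duality pairing on the $H^{-1/2}(\partial D)$ component matches the $L^2$ integral written for $H^*$.

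First, I would recall the three ingredients established above: (i) $H$ is the bounded trace-type operator sending $g$ to $(v_g|_D, v_g|_{\partial D}, \partial_\nu v_g|_{\partial D})^\top$, justified by interior elliptic regularity for the Herglotz function; (ii) the auxiliary problem \eqref{aux1}--\eqref{aux2} is well-posed by \cite{fmconductbc}, so $T$ in \eqref{defT} is well-defined and bounded on $L^2(D)\times L^2(\partial D)\times H^{-1/2}(\partial D)$; and (iii) the adjoint $H^*$ has the stated integral form. The key observation, already noted in the text, is that $u_g^s$ solves the auxiliary problem with data $(f,h,j) = (v_g|_D, v_g|_{\partial D}, \partial_\nu v_g|_{\partial D}) = Hg$; this is where one uses that $u^s_g$ is the scattered field for incident wave $v_g$ and compares \eqref{direct1}--\eqref{direct2} with \eqref{aux1}--\eqref{aux2}, matching $w = u^s_g$, $h = u^i = v_g$, $j = \partial_\nu u^i = \partial_\nu v_g$. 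Hence $THg$ equals the triple of densities $\big(k^2(n-1)(u_g^s+v_g)|_D,\ \eta(u_g^s+v_g)|_{\partial D},\ (1-\lambda)(\partial_\nu u^s_{g,-}+\partial_\nu v_g)|_{\partial D}\big)^\top$.

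Next I would apply $H^*$ to this triple using its integral representation and observe term-by-term that the result is exactly the expression for $Fg$ obtained by changing the order of integration in the Lippmann–Schwinger formula for $u^\infty(\hat{x},\hat{y})$ — the volume term against $k^2(n-1)[u^s_g+v_g]$, the boundary term against $\eta[u^s_g+v_g]$, and the normal-derivative term against $(1-\lambda)[\partial_\nu u^s_{g,-}+\partial_\nu v_g]$. This yields $H^*THg = Fg$ for all $g\in L^2(\mathbb{S}^{d-1})$, which is the claimed factorization. I would close by remarking that the factorization is "symmetric" in the sense that the outer operators are $H$ and its adjoint $H^*$, with the middle operator $T$ carrying the dependence on the parameters $n,\eta,\lambda$; no symmetry of $T$ itself is asserted. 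The only point requiring care — and the step I expect to be the main obstacle — is the interpretation of the third component: the pairing between $\varphi_3 \in H^{1/2}(\partial D)$ and $\partial_\nu w_-$ (or $\partial_\nu v_g$) in $H^{-1/2}(\partial D)$ must be read as the duality pairing, which coincides with the written surface integral only when the densities are regular enough; one invokes density of smooth functions and continuity of all operators involved to extend the identity, exactly as in the referenced treatment \cite{dm-1cbclandweber}.
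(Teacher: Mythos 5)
Your proposal follows the paper's own argument essentially verbatim: identify $u_g^s$ as the solution of the auxiliary problem \eqref{aux1}--\eqref{aux2} with data $Hg$, compute $THg$, apply the explicit form of $H^*$, and match the result term-by-term with the Lippmann--Schwinger expression for $Fg$. This is correct and coincides with the paper's derivation preceding Theorem \ref{Ffact}; your added care about the $H^{-1/2}(\partial D)$ duality pairing is a reasonable refinement but not a departure.
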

The factorization given above is one of the main pieces that will be used to derive an imaging functional. The next step in our analysis is the Funk--Hecke integral identity, this integral identity gives us the opportunity to evaluate the Herglotz wave function for $g = \phi_z$ which is given by 
\begin{equation}
v_{\phi_z} (x)= \int_{\mathbb{S}^{d-1}} \text{e}^{-\text{i}k(z-x)\cdot \hat{y}} \text{d}s(\hat{y})=
    \begin{cases}
        2\pi J_0(k|x-z|) & \text{in } \mathbb{R}^2\\
        
        4\pi j_0(k|x-z|) & \text{in } \mathbb{R}^3.
    \end{cases} \label{Herglotz}
\end{equation}
Here, $J_0$ is the zeroth order Bessel function of the first kind and $j_0$ is the zeroth order spherical Bessel function of the first kind. With the factorization of the far-field operator $F$ and the Funk--Hecke integral identity, we can solve the inverse problem of recovering $D$ by using the decay of the Bessel functions (similarly done in \cite{nf-harris,harris-dsm,harris-liem}). For the reconstruction of the region(s) $D$, we use the direct sampling method which implies that we have an imaging functional of the form; 
\begin{equation}
    W_{\text{DSM}}(z)=\left|  \big(\phi_z,F\phi_z\big)_{L^2(\mathbb{S}^{d-1})} \right|. 
    \label{OSMdef}
\end{equation}
To this end, by the definition of $v_g$ and the bound of the operator $T$, we have 
\begin{align*}
   \left|\big(\phi_z,F\phi_z\big)_{L^2(\mathbb{S}^{d-1})}\right|&= \left|\big(\phi_z,H^*TH\phi_z\big)_{L^2(\mathbb{S}^{d-1})}\right|\\
   &= \left|\big(H\phi_z,TH\phi_z\big)_{L^2(D)\times L^2(\partial D)\times H^{-\frac{1}{2}}(\partial D)}\right|\\
    &\leq C\left\|H\phi_z\right\|^2_{L^2(D)\times L^2(\partial D)\times H^{-\frac{1}{2}}(\partial D)}\\
 &=C\left(\|v_{\phi_z}\|^2_{L^2(D)}+\|v_{\phi_z}\|^2_{L^2(\partial D)}+\|\partial_{\nu} v_{\phi_z}\|^2_{H^{-\frac{1}{2}}(\partial D)}\right).
\end{align*} 
Clearly $\|v_{\phi_z}\|_{L^2(D)}$ and $\|v_{\phi_z}\|_{L^2(\partial D)}$ are bounded by the $H^1(D)$ norm of $v_{\phi_z}$ by the continuous embedding of $H^1(D)$ into $L^2(D)$ and the Trace Theorem. For the last term, we use the fact that $v_{\phi z}$ satisfies Helmhotz equation and as a consequence we have 
$$\|\partial_{\nu} v_{\phi_z}\|_{H^{-\frac{1}{2}}(\partial D)}\leq C \left(\|v_{\phi_z}\|_{H^1(D)}+\|\Delta v_{\phi_z}\|_{L^2(D)} \right)\leq C \|v_{\phi_z}\|_{H^1(D)}.$$ 
Thus, the inequality above together with how the Bessel functions decay gives us the following result that will characterized how our imaging function will decay.   

\begin{theorem}
For all $z\in \mathbb{R}^d\setminus \overline{D}$
$$ \left|\big(\phi_z,F\phi_z\big)_{L^2(\mathbb{S}^{d-1})}\right|= \mathcal{O}(\text{dist}(z,D)^{1-d})\hspace{.5cm}\text{for}\hspace{.5cm} \text{dist}(z,D)\to \infty.$$ \label{distD}
\end{theorem}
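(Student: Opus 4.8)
The plan is to start from the chain of estimates already established just above the statement, which yields
$$\left|\big(\phi_z,F\phi_z\big)_{L^2(\mathbb{S}^{d-1})}\right|\leq C\,\|v_{\phi_z}\|^2_{H^1(D)}.$$
This uses the symmetric factorization $F=H^*TH$ of Theorem \ref{Ffact}, the boundedness of $T$, the continuous embedding $H^1(D)\hookrightarrow L^2(D)$ together with the Trace Theorem for the $L^2(\partial D)$ term, and the interior estimate $\|\partial_\nu v_{\phi_z}\|_{H^{-1/2}(\partial D)}\leq C\|v_{\phi_z}\|_{H^1(D)}$ for solutions of the Helmholtz equation. Hence the entire statement reduces to proving that $\|v_{\phi_z}\|^2_{H^1(D)}=\mathcal{O}\big(\text{dist}(z,D)^{1-d}\big)$ as $\text{dist}(z,D)\to\infty$.

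Next I would invoke the Funk--Hecke identity \eqref{Herglotz}, which gives the closed form $v_{\phi_z}(x)=2\pi J_0(k|x-z|)$ in $\mathbb{R}^2$ and $v_{\phi_z}(x)=4\pi j_0(k|x-z|)$ in $\mathbb{R}^3$. Differentiating in $x$ produces $\nabla_x v_{\phi_z}(x)=-2\pi k\, J_1(k|x-z|)\,\tfrac{x-z}{|x-z|}$ in $\mathbb{R}^2$, and the analogous expression with $j_1$ in $\mathbb{R}^3$. I would then use the large-argument asymptotics $|J_0(t)|,|J_1(t)|\leq C t^{-1/2}$ and $|j_0(t)|,|j_1(t)|\leq C t^{-1}$, valid for $t$ bounded away from zero, to deduce the pointwise bound
$$|v_{\phi_z}(x)|^2+|\nabla_x v_{\phi_z}(x)|^2\leq C\,|x-z|^{1-d},$$
with $C$ depending only on $k$ and $d$, for all $x$ with $|x-z|$ sufficiently large.

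Finally, for $z\in\mathbb{R}^d\setminus\overline{D}$ and $x\in D$ one has $|x-z|\geq\text{dist}(z,D)$, so once $\text{dist}(z,D)$ is large enough the pointwise bound holds uniformly on $D$, and integrating gives
$$\|v_{\phi_z}\|^2_{H^1(D)}=\int_D |v_{\phi_z}(x)|^2+|\nabla_x v_{\phi_z}(x)|^2\,\text{d}x\leq C\,|D|\,\text{dist}(z,D)^{1-d},$$
using that $|D|<\infty$. Combined with the first displayed inequality, this proves the claim. I do not anticipate a serious obstacle here: the argument is essentially a reduction to the decay of Bessel functions. The only points needing care are the gradient term — handled cleanly by differentiating the Funk--Hecke formula and using the same-order decay of $J_1$ and $j_1$ — and the fact that the Bessel asymptotics are only valid for large argument, so the conclusion is necessarily the asymptotic statement as $\text{dist}(z,D)\to\infty$, exactly as stated. (One could instead bound $\|\nabla v_{\phi_z}\|_{L^2(D)}$ via interior elliptic regularity on a slightly enlarged domain, but the explicit computation is more transparent.)
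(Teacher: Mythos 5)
Your proposal is correct and follows essentially the same route as the paper: bound $\left|\big(\phi_z,F\phi_z\big)\right|$ via the factorization $F=H^*TH$ and the boundedness of $T$ by $C\|H\phi_z\|^2$, reduce this to the $H^1(D)$ norm of $v_{\phi_z}$ using the embedding, Trace Theorem, and the Helmholtz-equation estimate for the normal derivative, and then invoke the Funk--Hecke formula together with the large-argument decay of $J_0, J_1$ (rate $t^{-1/2}$) and $j_0, j_1$ (rate $t^{-1}$) to get the $\mathrm{dist}(z,D)^{1-d}$ rate. Your explicit differentiation of the Funk--Hecke formula and the pointwise-then-integrate step simply spell out details the paper leaves implicit.
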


Here the indicator $W_{\text{DSM}}(z)$ given by \eqref{OSMdef} will decay as $z$ moves away from the region $D$. Here we have used the fact that in $\R^2$ we have that $J_0(t)$ and its derivative decay at a rate of $t^{-1/2}$ as $t\to \infty$ and in $\mathbb{R}^3$ we have $j_0(t)$ and its derivatives decaying at a rate of $t^{-1}$ as $t\to \infty.$ This theorem gives the resolution analysis for using the imaging function as it implies that the imaging function will decay when we move away from the scatterer. 


We also take into consideration the stability of the imaging function \eqref{OSMdef}. We analyze the imaging function $W_{\text{DSM}}(z)$ given by \eqref{OSMdef} with respect to a given/measured perturbed far-field operator. The following theorem addresses the stability of the imaging function.

\begin{theorem}    \label{distDanderror}
    Assume that $\|F^{\delta}-F\|<\delta$ as $\delta\longrightarrow 0$, then for all $z \in \R^d$ 
    $$\left|\left|\big(\phi_z,F^{\delta}\phi_z\big)_{L^2({\mathbb{S}^{d-1}})}\right|- \left|\big(\phi_z,F\phi_z\big)_{L^2({\mathbb{S}^{d-1}})}\right|\right|< \delta  \quad \text{as }\quad \delta\longrightarrow 0.$$
\end{theorem}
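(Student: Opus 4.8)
The plan is to reduce the claimed estimate to a single application of the reverse triangle inequality, followed by the Cauchy--Schwarz inequality and the definition of the operator norm. First I would observe that for any two complex numbers $a,b$ one has $\big||a|-|b|\big|\leq|a-b|$, so, setting $a=\big(\phi_z,F^{\delta}\phi_z\big)_{L^2(\mathbb{S}^{d-1})}$ and $b=\big(\phi_z,F\phi_z\big)_{L^2(\mathbb{S}^{d-1})}$, it suffices to estimate
$$\big|\big(\phi_z,F^{\delta}\phi_z\big)_{L^2(\mathbb{S}^{d-1})}-\big(\phi_z,F\phi_z\big)_{L^2(\mathbb{S}^{d-1})}\big|=\big|\big(\phi_z,(F^{\delta}-F)\phi_z\big)_{L^2(\mathbb{S}^{d-1})}\big|.$$

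Next I would apply Cauchy--Schwarz in $L^2(\mathbb{S}^{d-1})$ together with the definition of the operator norm to obtain
$$\big|\big(\phi_z,(F^{\delta}-F)\phi_z\big)_{L^2(\mathbb{S}^{d-1})}\big|\leq\|F^{\delta}-F\|\,\|\phi_z\|^2_{L^2(\mathbb{S}^{d-1})}.$$
The only point that needs a word of justification is that $\|\phi_z\|_{L^2(\mathbb{S}^{d-1})}$ is a constant independent of the sampling point $z$: since $\phi_z(\hat{y})=\text{e}^{-\text{i}kz\cdot\hat{y}}$ has modulus one for every $\hat{y}\in\mathbb{S}^{d-1}$, we have $\|\phi_z\|^2_{L^2(\mathbb{S}^{d-1})}=|\mathbb{S}^{d-1}|$, a quantity depending only on the dimension $d$. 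Combining this with the hypothesis $\|F^{\delta}-F\|<\delta$ gives
$$\Big|\big|\big(\phi_z,F^{\delta}\phi_z\big)_{L^2(\mathbb{S}^{d-1})}\big|-\big|\big(\phi_z,F\phi_z\big)_{L^2(\mathbb{S}^{d-1})}\big|\Big|\leq|\mathbb{S}^{d-1}|\,\delta,$$
so the left-hand side is $\mathcal{O}(\delta)$ and tends to $0$ as $\delta\to0$, which is the assertion (the dimensional constant $|\mathbb{S}^{d-1}|$ may be absorbed, or $\phi_z$ normalized, if one wants the literal bound $<\delta$).

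There is no genuine obstacle here; the estimate is uniform in $z$ precisely because $\phi_z$ has $z$-independent $L^2(\mathbb{S}^{d-1})$ norm, in contrast to Theorem~\ref{distD}, where it is the $z$-dependence of $v_{\phi_z}$ restricted to $D$ that drives the decay. I would close by remarking that the same computation shows the imaging functional $W_{\text{DSM}}$ of \eqref{OSMdef} depends Lipschitz-continuously on the far-field operator measured in the operator norm, which is the sense in which the method is stable against noisy data.
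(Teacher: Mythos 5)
Your proposal is correct and follows essentially the same route as the paper's proof: the reverse triangle inequality reduces the claim to bounding $\big|\big(\phi_z,(F^{\delta}-F)\phi_z\big)_{L^2(\mathbb{S}^{d-1})}\big|$, which Cauchy--Schwarz and the operator norm control by $\|\phi_z\|^2_{L^2(\mathbb{S}^{d-1})}\,\|F^{\delta}-F\|$, with $\|\phi_z\|^2_{L^2(\mathbb{S}^{d-1})}=|\mathbb{S}^{d-1}|=2^{d-1}\pi$ independent of $z$. Your added remark that the literal bound ``$<\delta$'' only holds up to this dimensional constant is a fair observation about the statement, and the paper's own proof carries the same constant.
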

\begin{proof}
    Using the triangle inequality we have that 
    \begin{align*}
\left|\left|\big(\phi_z,F^{\delta}\phi_z\big)_{L^2({\mathbb{S}^{d-1}})}\right|- \left|\big(\phi_z,F\phi_z\big)_{L^2({\mathbb{S}^{d-1}})}\right|\right| & \leq \left|\big(\phi_z,(F^{\delta}-F)\phi_z\big)_{L^2({\mathbb{S}^{d-1}})}\right|\\
&\leq \left\| \phi_z\right\|^2_{L^2({\mathbb{S}^{d-1}})} \big\| F^{\delta}-F\big\| = 2^{d-1}\pi  \big\| F^{\delta}-F\big\|
\end{align*}
where on the last line we have used the Cauchy--Schwarz inequality, proving the claim.
\end{proof}
The stability result concludes the analysis for full aperture data. In the following section we present the analysis for the case where we have a limited aperture in $\mathbb{R}^2$.

\subsection{Remarks on Limited Aperture Data in $\mathbb{R}^2$}
We begin this section by assuming that we have a limited number of sources and/or receivers for the partial aperture problem dealing with two conductivity parameters. The sources here come from $\mathbb{S}^{1}_s \subsetneq  \mathbb{S}^1$ such that $\hat{y}\in \mathbb{S}^1_s$ and where 
$$ \mathbb{S}^1_s=\{(\text{cos}\theta,\text{sin}\theta):\theta\in [\alpha_s,\beta_s]\subsetneq [0,2\pi]\}.$$
Similarly, the receivers come from $\mathbb{S}^{1}_m\subsetneq  \mathbb{S}^1$ such that $\hat{x}\in \mathbb{S}^1_m$ and where 
$$\mathbb{S}^1_m=\{(\text{cos}\phi,\text{sin}\phi):\phi\in [\alpha_m,\beta_m]\subsetneq [0,2\pi]\}.$$ Using a similar analysis as before, we have a far-field pattern of the form $u^{\infty}(\hat{x},\hat{y})$ for limited sources and/or receivers. As a consequence of the far-field pattern, one can define the far-field operator with limited aperture as in \eqref{ff-operator}, but now given by 
\begin{equation}\label{ff-limitedoperator}
(\widetilde{F}g)(\hat{x})=\int_{\mathbb{S}^{1}_s}u^{\infty}(\hat{x},\hat{y})g(\hat{y})\, \text{d}s(\hat{y}) \Big|_{\hat{x} \in \mathbb{S}^1_m} \quad \text{ for } \quad g\in L^2(\mathbb{S}^{1}_s)
\end{equation}
where $\widetilde{F}: L^2(\mathbb{S}_s^{1}) \to L^2(\mathbb{S}_m^{1})$. Equivalently one wishes to construct an indicator function \eqref{OSMdef} is valid where one only has the limited far-field operator. Thus, we aim to factorize the operator $\widetilde{F}$ using an analogous analysis as in the previous section. To this end, we define that operator associated with limited aperture as 
\begin{equation*}
H_{s} g=\big(\widetilde{v}_g|_{D} \, ,\, \widetilde{v}_g|_{\partial D}\, ,\, \partial_{\nu}\widetilde{v}_g|_{\partial D}\big)^{\top} \quad \text{where} \quad \widetilde{v}_g (x) =\int_{ \mathbb{S}^1_s} \text{e}^{\text{i}k{x}\cdot \hat{y}}g(\hat{y}) \, \text{d}s(\hat{y})
\end{equation*}
and the adjoint operator for the $H_{m}$ associated the limited receivers is given by 
\begin{equation*}
H_m^*(\varphi_1,\varphi_2,\varphi_3)^{\top}=\int_D\varphi_1 (z) \text{e}^{-\text{i}k\hat{x}\cdot z}\, \text{d}z+\int_{\partial D}\varphi_2 (z)  \text{e}^{-\text{i}k\hat{x}\cdot z}\, \text{d}s(z)+\int_{\partial D}\varphi_3 (z)  \text{e}^{-\text{i}k\hat{x}\cdot z}\, \text{d}s(z)\Big|_{\hat{x} \in \mathbb{S}^1_m} 
\end{equation*}
where $H_m^*$ can be easily computed as in the previous section. With this, we have the following theorem addressing the factorization for the operator $\widetilde{F}.$

\begin{theorem}
The far-field operator $\widetilde{F}:L^2(\mathbb{S}_s^{1})\longrightarrow L^2(\mathbb{S}_m^{1}) $ that considers the limited aperture problem for sources and/or receivers has the symmetric factorization $\widetilde{F}=H_m^*TH_s$ where $T$ is as defined in \eqref{defT}. 
\end{theorem}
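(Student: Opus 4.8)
The plan is to transcribe, almost verbatim, the derivation of the symmetric factorization $F=H^*TH$ from Theorem~\ref{Ffact}, while keeping careful track of which unit arc plays the role of the sources and which plays the role of the receivers; the operator $T$ of \eqref{defT} is unchanged, since it is built from the auxiliary problem \eqref{aux1}--\eqref{aux2}, whose data live on $D$ and $\partial D$ and carry no information about the aperture. First I would introduce the companion operator to $H_s$ on the receiver side,
$$H_m g=\big(v_g|_D,\ v_g|_{\partial D},\ \partial_\nu v_g|_{\partial D}\big)^\top,\qquad v_g(x)=\int_{\mathbb{S}^1_m}e^{ikx\cdot\hat y}g(\hat y)\,ds(\hat y),\qquad g\in L^2(\mathbb{S}^1_m),$$
which is bounded into $L^2(D)\times L^2(\partial D)\times H^{-1/2}(\partial D)$ by the same continuous-embedding and Trace-Theorem estimate used for $H$ in \eqref{Hdef}. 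A Fubini computation identical to the one behind the formula for $H^*$ then shows that the Hilbert-space adjoint of $H_m$ is exactly the operator $H_m^*$ written above, with the evaluation restricted to $\hat x\in\mathbb{S}^1_m$; in particular both $H_s$ and $H_m^*$ are bounded and linear.

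Next I would substitute the Lippmann--Schwinger representation \eqref{totalfield} of $u^s$ into the definition \eqref{ff-limitedoperator} of $\widetilde F$. Writing $\widetilde v_g$ for the Herglotz wave with density $g\in L^2(\mathbb{S}^1_s)$ and $\widetilde u_g^s(x)=\int_{\mathbb{S}^1_s}u^s(x,\hat y)g(\hat y)\,ds(\hat y)$ for the corresponding superposition of scattered fields, an interchange of the order of integration gives
\begin{align*}
(\widetilde Fg)(\hat x)&=\int_D k^2(n-1)\big[\widetilde u_g^s+\widetilde v_g\big]e^{-ik\hat x\cdot z}\,dz+\int_{\partial D}\eta\big[\widetilde u_{g,-}^s+\widetilde v_g\big]e^{-ik\hat x\cdot z}\,ds(z)\\
&\qquad+\int_{\partial D}(1-\lambda)\big[\partial_\nu\widetilde u_{g,-}^s+\partial_\nu\widetilde v_g\big]e^{-ik\hat x\cdot z}\,ds(z),\qquad \hat x\in\mathbb{S}^1_m.
\end{align*}
By linearity and the well-posedness of \eqref{direct1}--\eqref{direct3}, $\widetilde u_g^s$ is the radiating solution of \eqref{aux1}--\eqref{aux2} with data $f=\widetilde v_g|_D$, $h=\widetilde v_g|_{\partial D}$, $j=\partial_\nu\widetilde v_g|_{\partial D}$, i.e.\ with input triple $(\widetilde v_g|_D,\widetilde v_g|_{\partial D},\partial_\nu\widetilde v_g|_{\partial D})^\top=H_sg$. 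Hence, by \eqref{defT},
\begin{align*}
TH_sg=\big(&k^2(n-1)(\widetilde u_g^s+\widetilde v_g)|_D,\ \eta(\widetilde u_g^s+\widetilde v_g)|_{\partial D},\\
&(1-\lambda)(\partial_\nu\widetilde u_{g,-}^s+\partial_\nu\widetilde v_g)|_{\partial D}\big)^\top.
\end{align*}

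Finally I would apply $H_m^*$ to this triple: comparing the explicit formula for $H_m^*(\varphi_1,\varphi_2,\varphi_3)^\top$ with the displayed expression for $(\widetilde Fg)(\hat x)$, $\hat x\in\mathbb{S}^1_m$, the two agree term by term, so $H_m^*TH_sg=\widetilde Fg$ for every $g\in L^2(\mathbb{S}^1_s)$, which is the asserted factorization. I do not anticipate a genuine obstacle here: the proof is a line-by-line adaptation of the full-aperture argument, and the only steps requiring a word of justification are the interchange of integration order (legitimate because the integrands are continuous in $(\hat x,\hat y,z)$ on the relevant compact sets, $u^\infty$ being analytic) and the identification of $H_m^*$ as the true adjoint of $H_m$ (a routine Fubini argument). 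It is perhaps worth noting in the write-up that ``symmetric'' refers to the adjoint-sandwich structure rather than to literal symmetry, since $H_s\neq H_m$ as soon as the source and receiver apertures differ.
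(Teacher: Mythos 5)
Your proposal is correct and follows exactly the route the paper intends: the paper states this theorem without writing out a proof, relying on the same line-by-line adaptation of the full-aperture factorization (Theorem \ref{Ffact}) that you carry out, with $H_s$ on the source side, $H_m^*$ on the receiver side, and $T$ unchanged. Your added remarks on the Fubini interchange and on ``symmetric'' referring to the $H_m^* T H_s$ sandwich structure are consistent with the paper's setup and raise no issues.
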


Now that we have the factorization of the operator $\widetilde{F}$, we want to formulate an indicator function that will accurately recover the scatterer(s) as the indicator function \eqref{OSMdef}. We first consider the expansion of an integral that was first analyzed in \cite{limitedaperture} for a limited aperture problem. We hope that by considering such form, we can construct an indicator function and show that it will decay as we move away from the scatterer. We begin by investigating the integral expansion presented in \cite{limitedaperture} that considers limited sources and/or receivers
\begin{align}
    \int_{\mathbb{S}_p^1} \text{e}^{\text{i}kx\cdot \hat{y}}\text{ds}(\hat{y})&=(\alpha_p-\beta_p)J_0(k|x|) \nonumber \\ 
    &+4\sum_{\ell=1}^{\infty}\frac{\text{i}^\ell}{\ell}J_\ell(k|x|)\text{cos}\left(\frac{\ell(\alpha_p+\beta_p-2\phi)}{2}\right)\text{sin}\left(\frac{\ell(\alpha_p-\beta_p)}{2}\right)    \label{expansionofsum}
        \end{align} 
where $\phi$ is the polar angle of $x$ and $p=m,s$.

We will use \eqref{expansionofsum} to numerically show that 
$$ \left\|H_p \phi_z \right \|_{L^2(D)\times L^2(\partial D)\times H^{-\frac{1}{2}}(\partial D)} =\mathcal{O}\big(\text{dist}(z,D)^{-1/2}\big)\hspace{.5cm}\text{for}\hspace{.5cm} \text{dist}(z,D)\to \infty.$$
First, it is well known that Bessel functions $J_\ell (t)$ have the following asymptotic behavior 
$$J_\ell (t)\sim \Big(\frac{\text{e}t}{2\ell}\Big)^\ell \quad \text{as } \quad \ell\to \infty$$ 
which implies that \eqref{expansionofsum} converges uniformly on any compact subset of $\R^2$.  Similarly, one can take the partial derivative with respect to either $|x|=r$ or $\phi$, and use the recursive relationship $J'_\ell(t)=\frac{1}{2}\Big(J_{\ell-1}(t)-J_{\ell+1}(t)\Big)$ to show that the partial derivatives converge uniformly on any compact subset of $\R^2$. Now, we provide a numerical example to show the decay of \eqref{expansionofsum} as $|x| \to \infty.$ Therefore, in the following example we fix 
$$\alpha_p=\frac{3\pi}{2},\quad \beta_p=\frac{\pi}{4}\quad  \text{and}\quad  \phi=\frac{\pi}{2}$$ 
in \eqref{expansionofsum} plot the absolute values of the truncated sum such that $\ell=1 , \ldots , 15$. In Figure \ref{sumplot}, we see that truncated function in \eqref{expansionofsum} and it's partial derivative decays like the function $|x|^{-1/2}$ as $|x| \to \infty$ just as in the case of full aperture data. 

\begin{figure}[ht]
\centering 
\includegraphics[width=7cm]{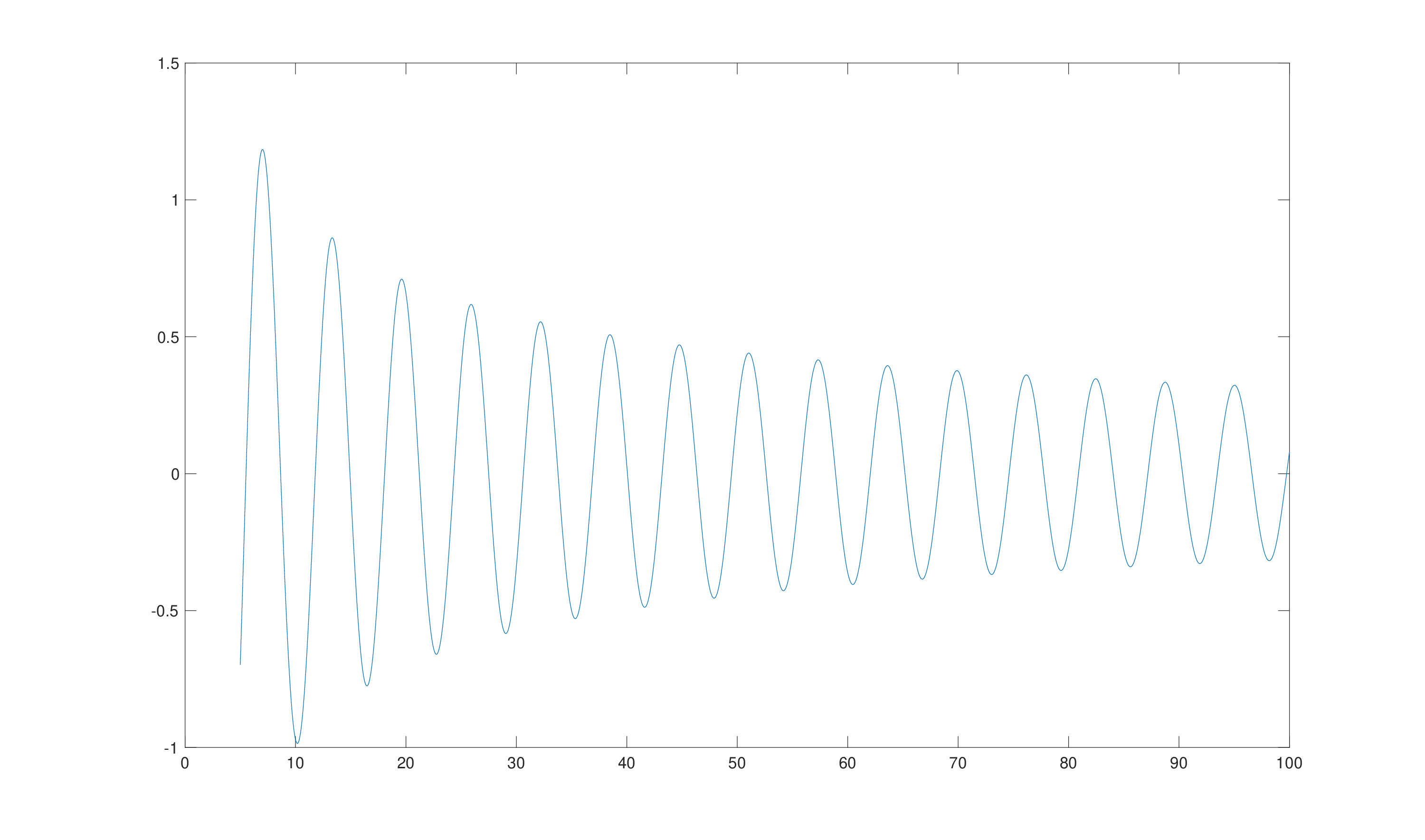}\hspace{.5cm}\includegraphics[width=7cm]{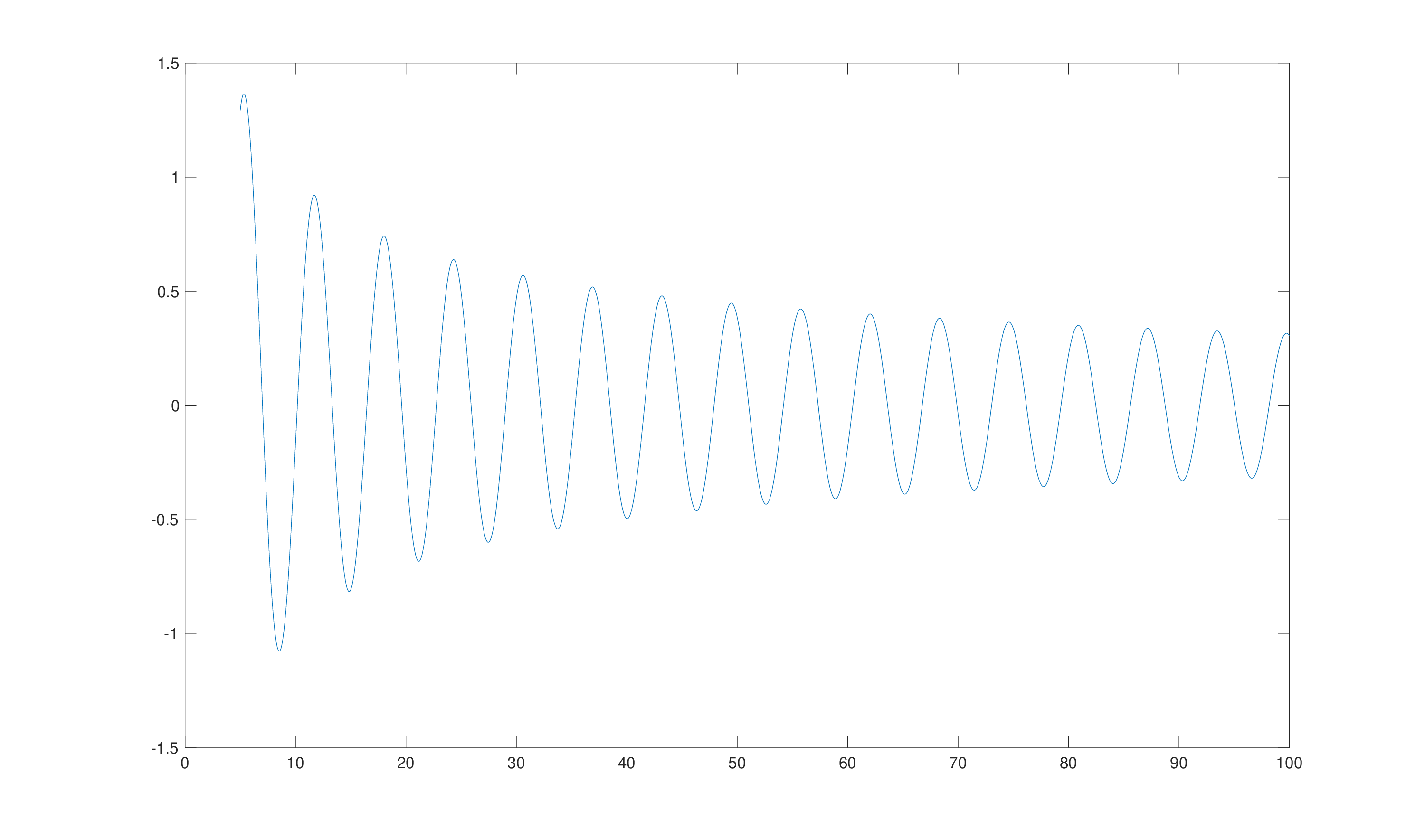} 
\caption{Left to right: plot of \eqref{expansionofsum} up to $s=15$ and plot of the partial derivative of \eqref{expansionofsum} with respect to $r=|x|$ for $s=15$.} 
\label{sumplot}
\end{figure}

The observations on the behavior of \eqref{expansionofsum} allows one to make a numerical conjecture about the decay of the sum that will characterize the imaging function for the case when one only has the limited aperture far-field operator associated with the scattering problem. 

\begin{remark}
Let $\widetilde{F}$ be the limited aperture far-field operator. Then for all $z\in \mathbb{R}^2\setminus \overline{D}$
\begin{align*}
\left|\big(\phi_z,\widetilde{F}\phi_z\big)_{L^2(\mathbb{S}^{1}_m)}\right|&=
\left|\big(H_m\phi_z,TH_s\phi_z\big)_{L^2(D)\times L^2(\partial D)\times H^{-\frac{1}{2}}(\partial D)}\right|\\
 &\leq C\left\|H_m\phi_z\right\|_{L^2(D)\times L^2(\partial D)\times H^{-\frac{1}{2}}(\partial D)}\left\|H_s\phi_z\right\|_{L^2(D)\times L^2(\partial D)\times H^{-\frac{1}{2}}(\partial D)}\\
 &=\mathcal{O}(\text{dist}(z,D)^{-1})\hspace{.5cm}\text{for}\hspace{.5cm} \text{dist}(z,D)\to \infty.
\end{align*} 
\end{remark}
Here we can construct an indicator similar as \eqref{OSMdef} that will decay as $z$ moves away from the region $D$ when $F$ is replaced with $\widetilde{F}$. We will show that we can still recover the scatterer $D$ with limited aperture data. Now, we are ready to present numerical examples for the full and limited aperture problems.

\section{Numerical Validation}
\subsection{Boundary Integral Equations}
We first derive the boundary integral equation to compute far-field data for arbitrary domains in two dimensions which are defined through a smooth parametrization. For simplicity we will assume that the coefficients $n,\eta$ and $\lambda$ are all constant. To begin, since that total field $u$ and radiating scattered field $u^s$ satisfies 
$$\Delta u^s +k^2 u^s=0  \quad \textrm{ in }  \R^2 \setminus \overline{D}  \quad  \text{and} \quad  \Delta u +k^2 nu=0  \quad \textrm{ in } \,  {D} $$
we make the ansatz that 
$$u = (\mathrm{SL}_{k\sqrt{n}} \psi_1)(x)  \quad \textrm{ in } \,  {D}   \quad \text{and} \quad u^s(x) = (\mathrm{SL}_{k} \psi_2)(x) \quad \textrm{ in }  \R^2 \setminus \overline{D}.$$ 
Here for the wave number $\tau$ we have that the single layer potential 
$$ (\mathrm{SL}_{\tau} \psi )(x) = \int_{\partial D} \Phi_{\tau}(x,w)\psi(w)\,\mathrm{d}s(w)$$
where $\Phi_{\tau}(x,y)$ is the fundamental solution to the Helmholtz equation in $\R^2$. Recall that the single layer potential satisfies the equation in $D$ for $\tau =k\sqrt{n}$  and $\R^2 \setminus \overline{D}$ for $\tau =k$ along with the radiation condition for $u^s$. Therefore, we need to find $(\psi_1 ,\psi_2)^{\top}$ that satisfies the boundary conditions 
$$ (u^s+u^i )^+ - u^-=0  \quad  \text{and} \quad \partial_\nu (u^s+u^i )^+ + \eta (u^s+u^i )^+= \lambda {\partial_{\nu} u^-} \quad \textrm{ on } \,  \partial D $$
where the incident plane wave is given by form $u^i=\text{e}^{\text{i} k x \cdot \hat{y} }$ with incident direction $\hat{y}\in \mathbb{S}^{1}$.

By the jump relations (see for e.g. \cite{int-equ-book}) we have that the above boundary conditions can be written as 
$$ \mathrm{S}_{k\sqrt{n}}\psi_1-\mathrm{S}_{k}\psi_2=u^i \quad \text{and} \quad \lambda\left(\frac{1}{2}\mathrm{I}+\mathrm{K}'_{k\sqrt{n}}\right)\psi_1-\left(-\frac{1}{2}\mathrm{I}+\mathrm{K}'_{k}\right)\psi_2-\eta \mathrm{S}_{k}\psi_2=\partial_\nu u^i+\eta u^i$$
for all $x \in \partial D$. Here for the wave number $\tau$ we define 
$$ (\mathrm{S}_{\tau} \psi)(x)=\int_{\partial D}\Phi_\tau (x,w)\psi(w)\,\mathrm{d}s(w) \quad  \text{and} \quad (\mathrm{K}'_{\tau}\psi) (x)=\int_{\partial D}\partial_{\nu(x)}\Phi_\tau (x,w)\psi(w)\,\mathrm{d}s(w) $$
for all $x\in \partial D$. This gives a 2 $\times$ 2 system of equation to solve for $(\psi_1 ,\psi_2)^{\top}$. We can then use a standard discretization to solve the above system which implies that the far-field data is given by 
$$ u^\infty (\hat{x} , \hat{y} ) = \int_{\partial D} \text{e}^{-\text{i} k \hat{x} \cdot w} \psi_2(w ; \hat{y})\,\mathrm{d}s(w)$$
which we can use in our numerical experiments. 

In order to check the accuracy of our system of boundary integral equation we compare with the analytical solution when the scatterer $D$ is the unit disk. To this end, we will derive an analytical formula for the far-field pattern $ u^\infty (\hat{x} , \hat{y} )$ via separation of variables. Similar calculations have been in greater details in \cite{dm-1cbclandweber}. With this, we first note that 
$$  u^s(x)=\sum_{p=-\infty}^\infty \mathrm{i}^p a_pH_p^{(1)}(k|x|)\mathrm{e}^{\mathrm{i}p(\theta -\phi)} \quad \text{ and } \quad  u(x)=\sum_{p=-\infty}^\infty \mathrm{i}^p b_pJ_p(k\sqrt{n}|x|)\mathrm{e}^{\mathrm{i}p(\theta -\phi)}$$
where $\theta$ is the angular coordinate for $x$ and $\phi$ is the angular coordinate for $\hat{y}$.
These series solutions satisfy the differential equation and we only need to apply the boundary conditions. Therefore, by the Jacobi-Anger expansion for plane waves, we have that   
$$ \begin{pmatrix} 
  H_p^{(1)}(k) & -J_p(k\sqrt{n})\\
  kH_p^{(1)'}(k)+\eta H_p^{(1)}(k) & -\lambda k\sqrt{n}J_p'(k\sqrt{n})
\end{pmatrix} 
\begin{pmatrix} 
  a_p\\
b_p
\end{pmatrix} 
=\begin{pmatrix} 
  -J_p(k)\\
  -kJ_p'(k)-\eta J_p(k)
\end{pmatrix} 
$$
where the system of equations comes from forcing the boundary conditions. This implies that the `Fourier' coefficients $a_p$ are given by
\begin{eqnarray*}
 a_p=-\frac{
 \lambda k\sqrt{n}J_p(k)J_p' (k\sqrt{n})-J_p(k\sqrt{n})\left(kJ_p'(k)+\eta J_p(k)\right)
 }{
 \lambda k\sqrt{n}H_p^{(1)}(k)J_p'(k\sqrt{n})-J_p(k\sqrt{n})\left(kH_p^{(1)'}(k)+\eta H_p^{(1)}(k)\right)
 }\,.
\end{eqnarray*}
The gives that the far-field pattern has the form 
\begin{eqnarray}
 u^\infty(\hat{x}, \hat{y})=\frac{4}{\mathrm{i}}\sum_{p=-\infty}^\infty a_p \mathrm{e}^{\mathrm{i}p(\theta -\phi)}
  \label{farfieldseries}
\end{eqnarray}
which gives the analytic expression needed.

We let ${\bf F}_k\in \mathbb{C}^{64 \times 64}$ be the matrix containing the far-field data for $64$ equidistant incident directions and $64$ evaluation points for the unit disk with the physical parameters, $\eta$, $\lambda$, $n$ and given wave number $k$ obtained by (\ref{farfieldseries}). We denote by ${\bf F}_k^{(N_f)}$ the far-field data obtained through the boundary element collocation method, where $N_f$ denotes the number of faces in the method. Note that the number of collocation nodes is $3\cdotp N_f$ and the absolute error is defined by 
$$\varepsilon_k^{(N_f)} :=\|{\bf F}_k-{\bf F}_k^{(N_f)}\|_2 .$$
In Table \ref{tablefarfield}, we show the absolute error of the far-field for 64 incident directions and 64 evaluation point, for a disk with radius $R=1$ and the parameters $\eta=2+\mathrm{i}$, $\lambda=2$, and $n=4+\mathrm{i}$ and the wave numbers $k=2$, $k=4$, and $k=6$. As we can observe, we obtain very accurate results using $120$ collocation nodes.

\begin{table}[!ht]
\centering
 \begin{tabular}{c|c|c|c|}
  $N_f$ & $\varepsilon_2^{(N_f)}$ & $\varepsilon_4^{(N_f)}$ & $\varepsilon_6^{(N_f)}$ \\
  \hline 
20 & 0.09931 & 0.72936 & 2.76920\\
40 & 0.00889 & 0.00988 & 0.04348\\
60 & 0.00091 & 0.00077 & 0.00149\\
80 & 0.00010 & 0.00008 & 0.00008\\
\hline
 \end{tabular}
 \caption{\label{tablefarfield}Absolute error of the far-field with 64 equidistant incident directions and 64 evaluation for the disk with $R=1.$}
\end{table}
\subsection{Numerical Examples}
In this section, we present numerical examples for reconstructing extended scatterers via the direct sampling method. We will show that Theorem \ref{distD} can be used to numerically recover the scatterer. Similarly to the previous section, the numerical computations are done in MATLAB 2021b. We used the system of boundary integral equations from the previous section to approximate the discretized far-field operator 
$$\textbf{F}=\Big[u^\infty(\hat{x}_i,\hat{y}_j)\Big]^N_{i,j=1}.$$
We can discretize such that   
$$\hat{x}_i=\hat{y}_i=(\cos(\theta_i),\sin(\theta_i))\quad \text{ where } \quad \theta_i=2\pi(i-1)/64 \,\, \text{ for} \,\,i=1,\dots,64.$$ 
We get then $\textbf{F}$ which is a $64 \times 64$ complex valued matrix with $64$ incident and observation directions. An additional component needed is the vector $\phi_z$ which we compute by
$$\boldsymbol{\phi}_z=\big(\text{e}^{-\text{i}k\hat{x}_1\cdot z},\dots, \text{e}^{-\text{i}k\hat{x}_{64} \cdot z}\big)^\top\hspace{.5cm}\text{where}\hspace{.5cm}z\in \mathbb{R}^2.$$  In order to model experimental error in the data we add random noise to the discretized far-field operator $\mathbf{F}$ such that
$$\mathbf{F}^{\delta}=\Big[ \mathbf{F}_{i,j}(1+\delta\mathbf{E}_{i,j})\Big]^{64}_{i,j=1}\hspace{.5cm}\text{where}\hspace{.5cm} \| \mathbf{E}\|_2=1.$$
Here, the matrix $\mathbf{E} \in \C^{64 \times 64}$ is taken to have random entries and $0<\delta \ll 1$ is the relative noise level added to the data. This gives that the relative error is given by $\delta$.

Thus, numerically we can approximate the imaging function by 

$$W_{\text{DSM}}(z)=\Big|\big(\boldsymbol{\phi}_z,\mathbf{F}^{\delta}\boldsymbol{\phi}_z\big)_{\ell^2} \Big|^2$$
where we use the 2nd power to increase the resolution in the numerical examples.  We consider the following three domains with same fixed parameters: kite, peanut, and star. Their respective parameterizations are given by 
$$\big(-1.5\sin(\theta),\cos(\theta)+0.65\cos(2\theta)-0.65\big)^\top, \quad 2\sqrt{\frac{\sin(\theta)^2}{2}+\frac{\cos(\theta)^2}{10}}\big(\cos(\theta),\sin(\theta)\big)^\top$$
$$\hspace{3cm}\text{and}\quad \frac{1}{4}\Big(2+0.3\cos(5\theta)\Big)\big(\cos(\theta),\sin(\theta)\big)^\top.$$
To analyze the change of different parameters, we use the unit disk as the domain of interest given by the parameterization $\partial D=\big(\cos(\theta),\sin(\theta)\big)^\top$. 

To discretize the problem for limited aperture, we take the same form for $\hat{x}_i$, $\hat{y}_i$, and $\theta_i$ as above for $i=1,\dots,64$. To simulate limited aperture for either the sources and/or receivers, it is enough to either remove columns and/or rows, respectively, from $\textbf{F}$. In this case, we limit the $i$'s for either $\hat{x}_i$ and/or $\hat{y}_i$. For example, if we want a problem with only half of the sources and all receivers, we take $\hat{y}_i$ for $i=1,\dots,32$ and $\hat{x}_i$ for $i=1,\dots,64.$ Now, we are ready to present the numerical results for the full aperture and limited aperture. 
\vspace{.5cm}

\noindent\textbf{Example 1. Recovering a star region with full aperture:}\\
\noindent For the star shaped domain, we assume that the refractive index is $n=4+\text{i}$ and boundary parameters $\eta=2+\text{i}$ and $\lambda=2.$ We will take $k=2\pi$ as the wave number, $\delta=0.10$ which corresponds to the $10\%$ random noise added to the data, and use the imaging functional \eqref{OSMdef} to recover the scatterer.  
\begin{figure}[htb!]
\centering 
\includegraphics[width=12cm]{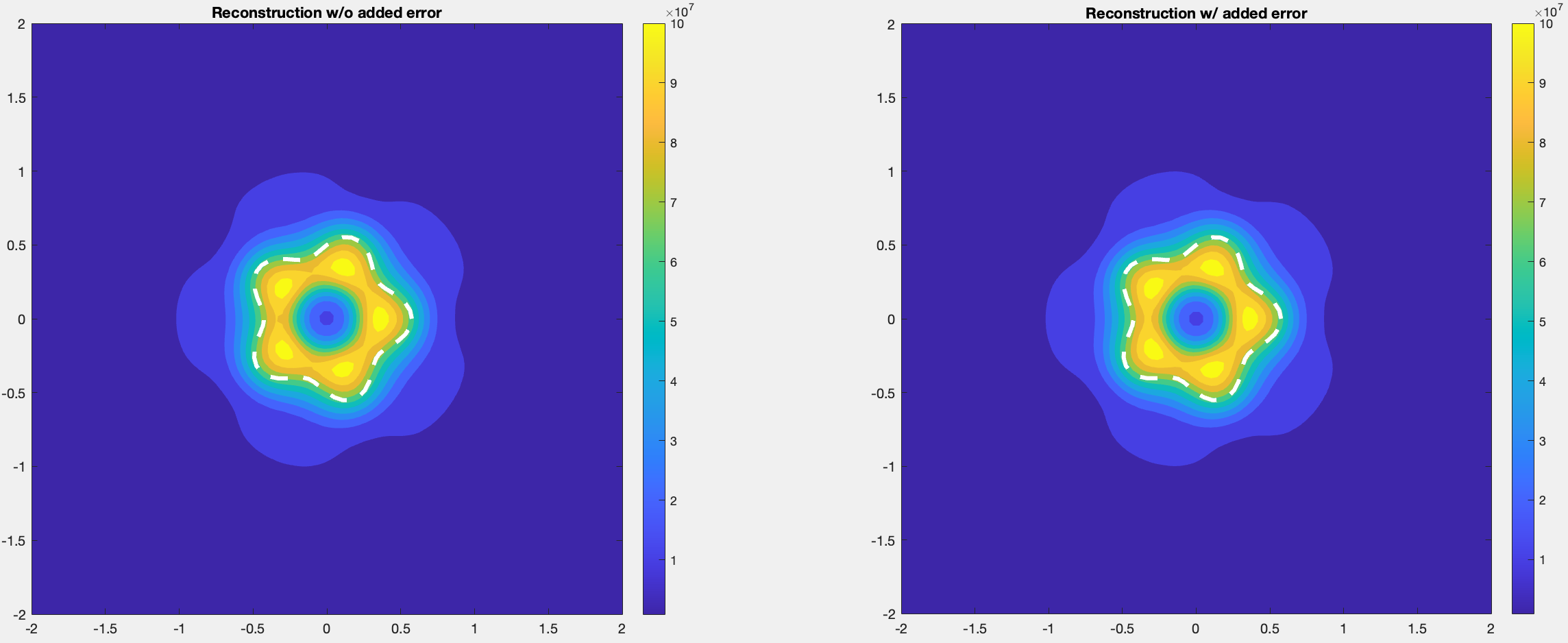}
\caption{Reconstruction star region by the DSM without noise and with 10\% noise.}
\label{star10}
\end{figure}

In Figure \ref{star10}, we see that both images are very similar and both give a good approximation of the star scatterer. Even with noise, we see that we capture the entire scatterer which validates the theory presented in Theorem \ref{distD} and in Theorem \ref{distDanderror}.\\

\noindent\textbf{Example 2. Recovering a peanut region with full aperture:}\\
\noindent For this reconstruction, we take the same values for the physical parameters as example 1. We add 5\% random noise to the data and use the imaging functional presented in \eqref{OSMdef} to recover the region of interest.
\begin{figure}[htb!]
\centering 
\includegraphics[width=12cm]{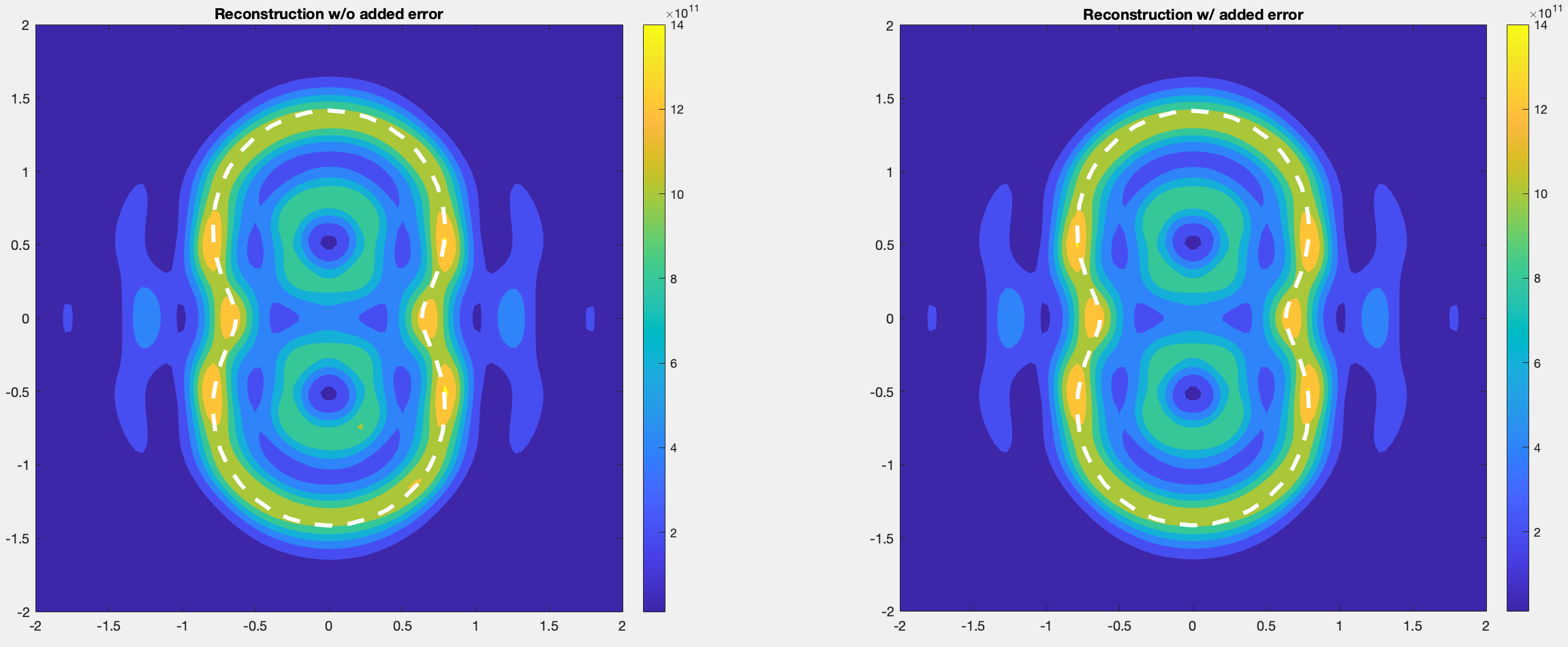}
\caption{Reconstruction of peanut region by the DSM without noise and with  5\% noise.}
\label{peanut5}
\end{figure}

In Figure \ref{peanut5}, we see that with even more random noise added, the reconstruction is very good. Using the imaging functional \eqref{OSMdef} we recover the scatterer in terms of position, location, and size. \\

\noindent\textbf{Example 3. Recovering a kite region with full aperture:}\\
For this numerical experiment, we take the same values for the physical parameters as previous examples and we add 15\% random noise to the data. 
\begin{figure}[htb!]
\centering 
\includegraphics[width=12cm]{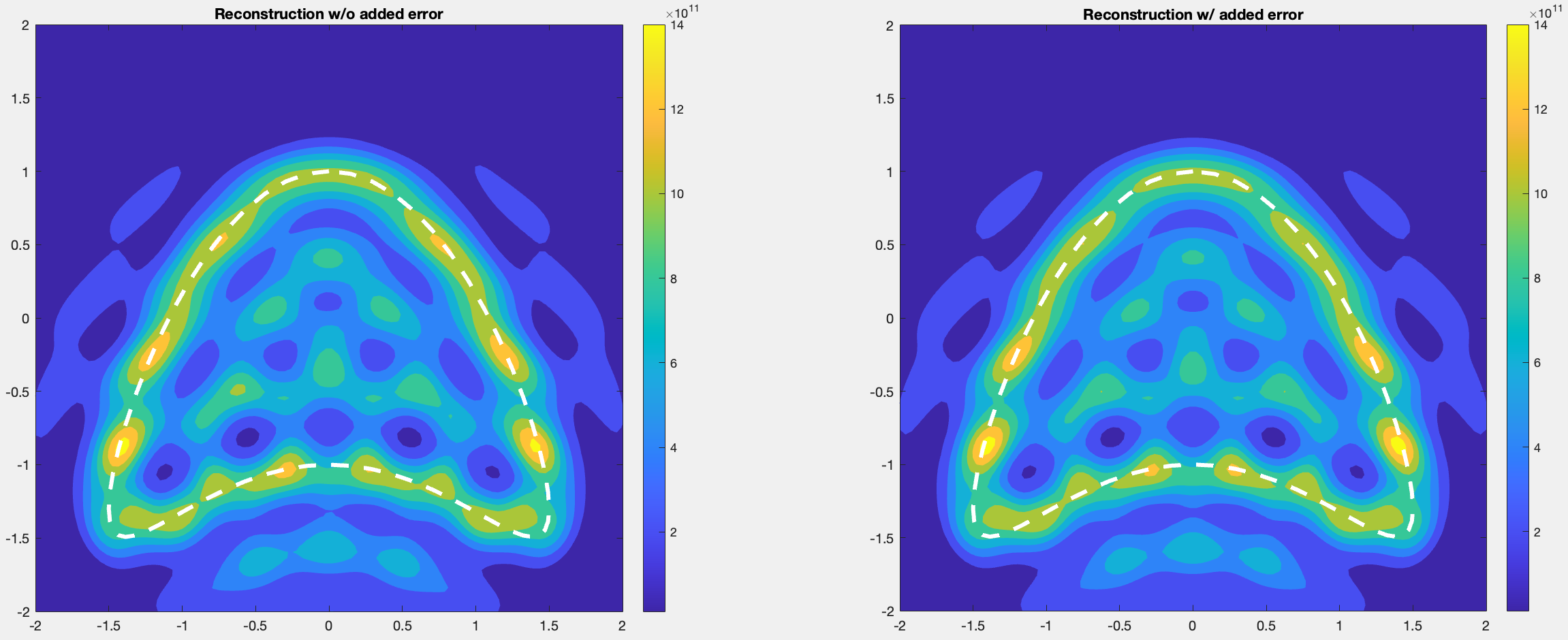}
\caption{Reconstruction of kite region by the DSM without noise and with 15\% noise.}
\label{kite15}
\end{figure}

In this example, we also see that Figure \ref{kite15} does not change when we add noise to the data. Thus, our imaging functional is performing well as we sample the points. We capture entirely the scatterer in terms of position, size, and shape. \\

\noindent\textbf{Example 4. Recovering a kite region with partial aperture (limited receivers):}\\
In the next example, we consider limited aperture data for the kite shape domain. We fix the physical parameters as in previous examples. We first consider the receivers $\hat{x}_i$ coming from fixing $i=1,\dots,32$ and the sources $\hat{y}_i$ coming from fixing $i=1,\dots,64$. In other words, sampling on the receivers only on half of the unit disk, but taking all the sources on the full unit disk. We add 15\% random noise to the data and use an imaging functional similar as \eqref{OSMdef} but making the proper adjustments to account for the limited sources.
\begin{figure}[htb!]
\centering 
\includegraphics[width=12cm]{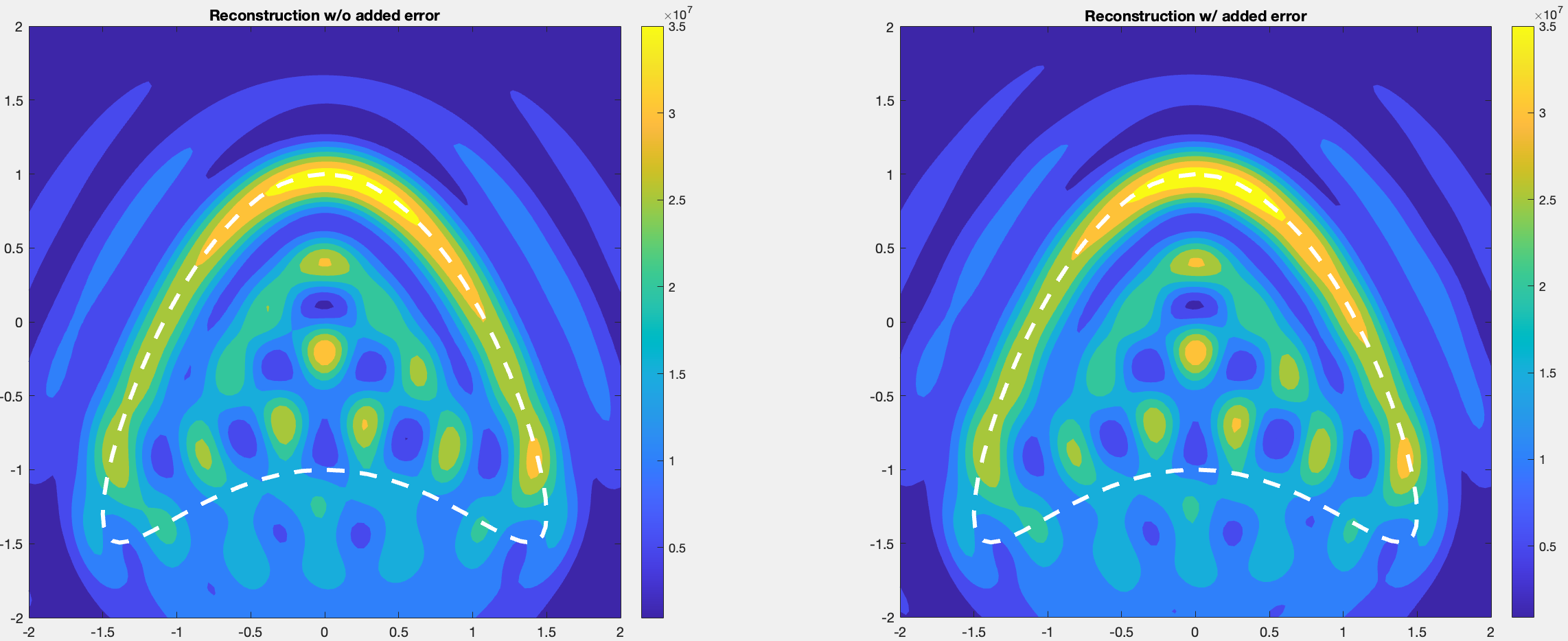}
\caption{Reconstruction of kite region using limited receivers by the DSM without noise and with 15\% noise.}
\label{kite15limited}
\end{figure}

 In Figure \ref{kite15limited}, we see that even though that we had a limited amount of receivers on the unit circle our reconstruction is positive and favorable. We understand the location, the shape, and the size of the kite scatterer even without the full knowledge of the receivers. In addition, we see that adding noise does not change the image and as a consequence we will consider other scatterers with limited data without adding noise.\\

 \noindent\textbf{Example 5. Recovering a kite region with partial aperture using limited sources and limited receivers and sources:}\\
 Let the physical parameters be as in the previous examples and we add $0\%$ noise to the data. In this example, we first limit the amount of receivers and then we limit the amount of receivers and sources. In the left image we consider the limited aperture example that takes the receivers $\hat{x}_i$ coming from fixing $i=1,\dots,64$ and the sources $\hat{y}_i$ coming from fixing $i=1,\dots,48$. In other words, sampling on the receivers for a full unit disk, but taking the first three halves of the unit disk for the sources. In the image on the right, we take the receivers $\hat{x}_i$ coming from fixing $i=1,\dots,32$ and the sources $\hat{y}_i$ coming from fixing $i=16,\dots,64$. So, the receivers come from the first half of the unit disk and the sources come from the last three halves of the unit disk. Thus, we have the following numerical examples for the kite scatterer with limited data. 
\begin{figure}[htb!]
\centering 
\includegraphics[width=12cm]{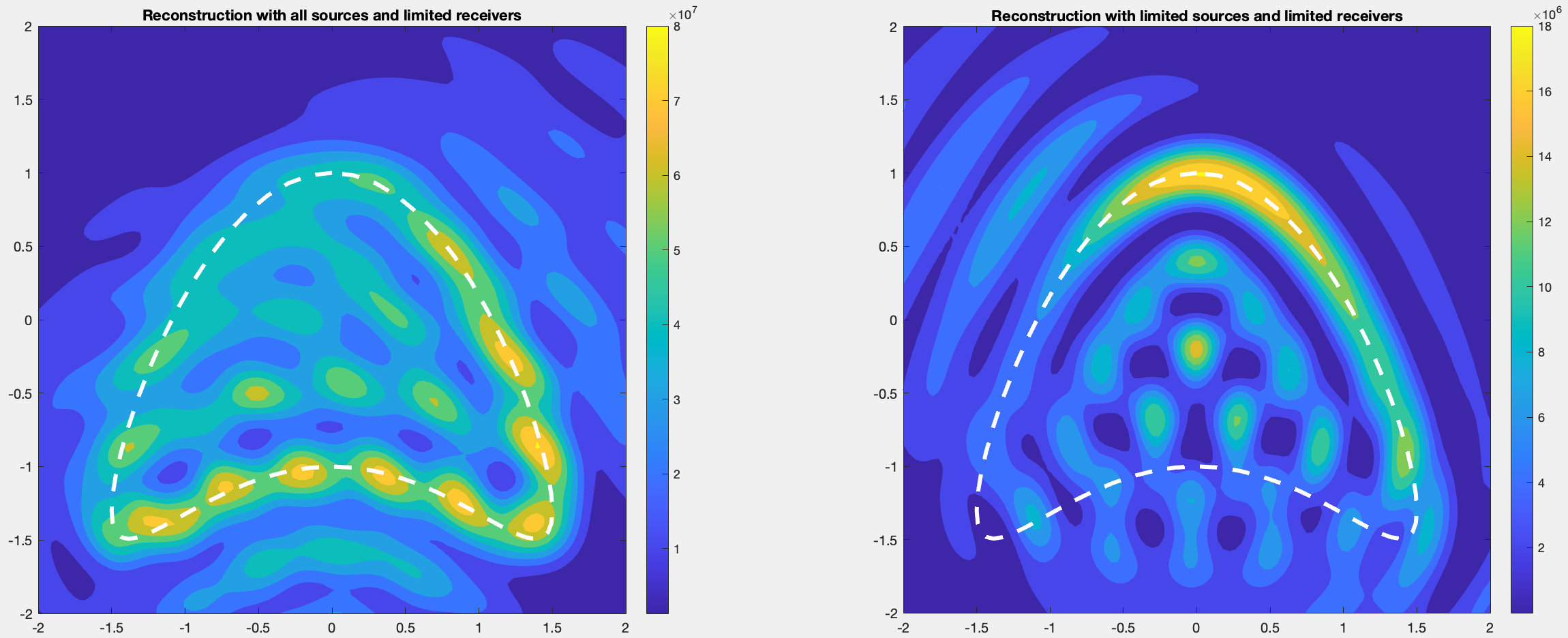}
\caption{Reconstruction of kite region using limited aperture with $0\%$ added by the DSM. Image left to right: reconstruction on limited aperture on the sources and reconstruction on limited aperture on both the receivers and sources.}
\label{kitelimited}
\end{figure}

 In Figure \ref{kitelimited}, we see that having limited information on the receivers and sources simultaneously gives a poor reconstruction in comparison to only limiting the receivers (by previous example) or sources. Having either limited receivers or limited sources still gives us a favorable and positive reconstruction in terms of the location, shape, and size of the kite shape domain which validates our conjecture in the remarks on limited aperture in $\mathbb{R}^2$.\\

\noindent\textbf{Example 6. Recovering a disk region with full aperture:}\\ 
For this reconstruction, we take the same values for the physical parameters as example 1. We add 20\% random noise to the data and use our imaging functional presented in \eqref{OSMdef} to recover the region of interests.
\begin{figure}[htb!]
\centering 
\includegraphics[width=12cm]{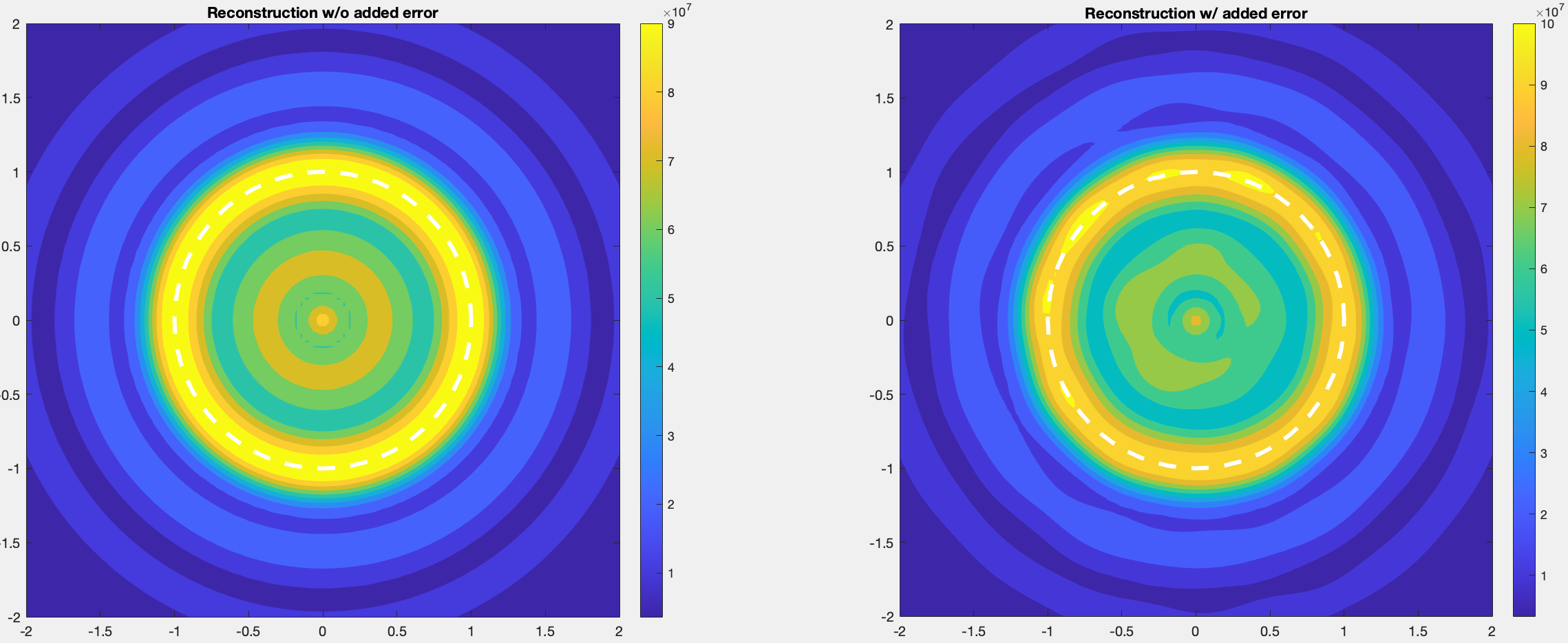}
\caption{Reconstruction of disk region using the DSM without noise and with 20\% noise.}
\label{circle20}
\end{figure}

In this example, we also see that the reconstruction without noise added to the data outperforms the one with noise. However, the one with noise still gives us the scatterer's position, shape, and location. In Figure \ref{circle20}, we see that both images recover the  disk even though the clarity of the one without noise is better. Thus, our imaging functional is performing well even when we add 20\% noise to the data. This validates Theorem \ref{distD} and Theorem \ref{distDanderror}.\\

 \noindent\textbf{Example 7. Recovering a disk region with full aperture using different physical parameters:}\\ 
 For the disk-shaped domain in this example, we assume that the refractive index is $n=4+3\text{i}$ and the boundary parameters $\eta=2+2\text{i}$ and $\lambda=5.$ Here, we will take $k=2\pi$ as the wave number and we let $\delta=0.10$ which corresponds to the $10\%$ random noise added to the data.  
\begin{figure}[htb!]
\centering 
\includegraphics[width=12cm]{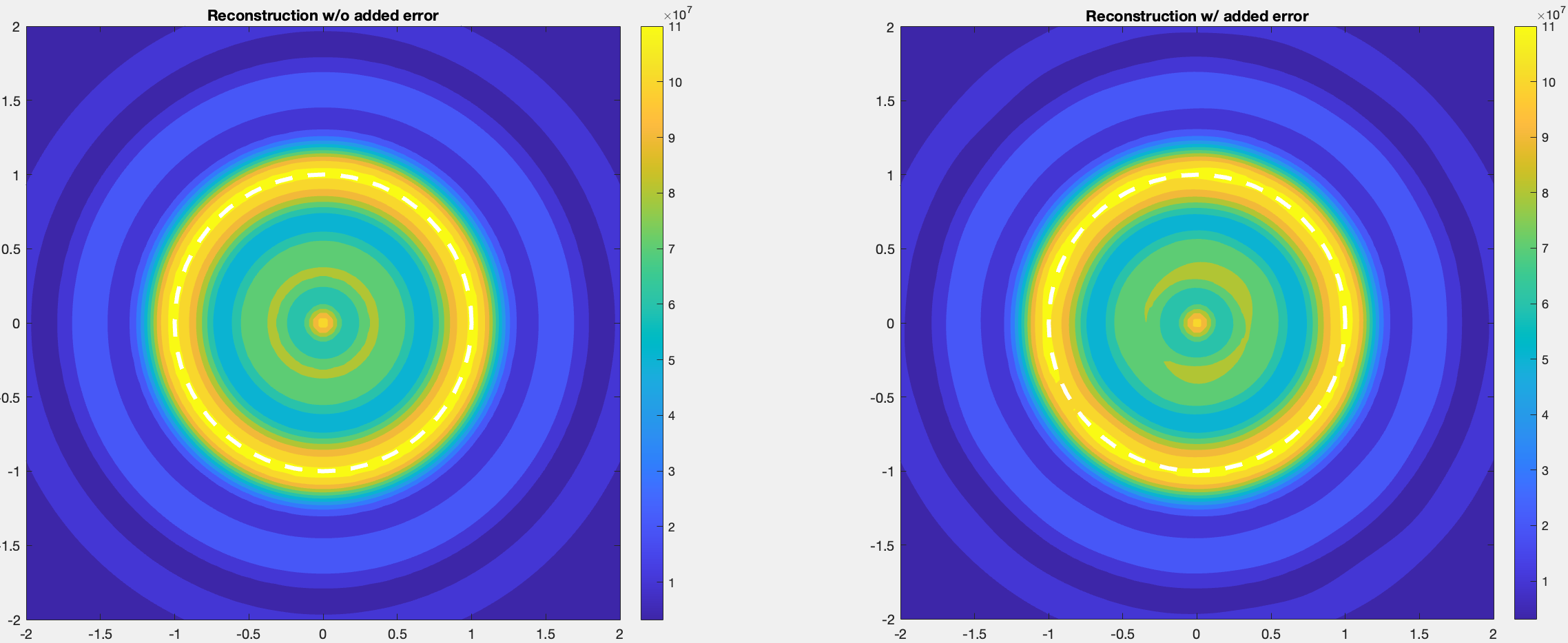}
\caption{Reconstruction of disk region by the DSM without noise and with 10\% noise.}
\label{circlediffpa10}
\end{figure}

 In Figure \ref{circlediffpa10}, we see that even when we change the physical parameters that correspond to the problem, we obtain a positive and favorable reconstruction of the scattering object even with noise added to the data. Adding noise to the data or not, we still recover the disk scatterer in terms of position, size, and location.\\ 

 \noindent\textbf{Example 8. Recovering a disk region with limited aperture:}\\
 Let the physical parameters be as example 7 and here we add $10\%$ noise to the data. In this example, we first limit the amount of receivers and then we limit the amount of sources. In the left image we consider the limited aperture example that takes the receivers $\hat{x}_i$ coming from fixing $i=1,\dots,48$ and the sources $\hat{y}_i$ coming from fixing $i=1,\dots,64$. In other words, sampling on the receivers for the first three halves of the unit disk, but taking the full unit disk for the sources. In the image on the right, we take the receivers $\hat{x}_i$ coming from fixing $i=1,\dots,64$ and the sources $\hat{y}_i$ coming from fixing $i=16,\dots,64$. So, the receivers come from the whole unit disk and the sources come from the last three halves of the unit disk. In both images, we add $10\%$ random noise to the data and thus we have the following numerical examples for the limited aperture for the disk scatterer. 
 
 \begin{figure}[htb!]
\centering 
\includegraphics[width=12cm]{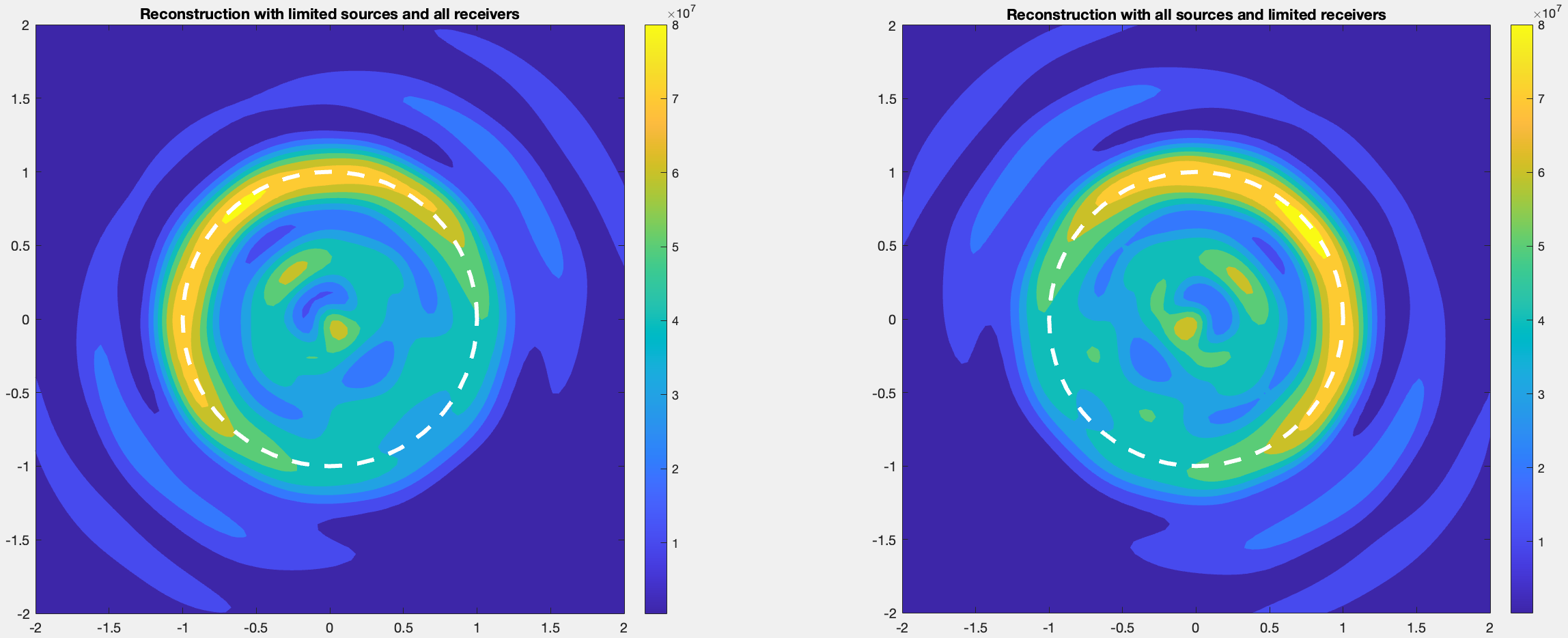}
\caption{Reconstruction of disk region using limited aperture with $10\%$ noise added by the DSM. Image left to right: reconstruction on limited aperture on the receivers and limited aperture on sources.}
\label{circlediffpalimited10}
\end{figure}

 In Figure \ref{circlediffpalimited10}, we see that having limited aperture information on the receivers or sources gives a good reconstruction even with random noise added to the data. The imaging functional that considers either limited receivers or sources still gives us a good reconstruction in terms of the location, shape, and size of the disk-shaped domain. \\
 
 \section{Conclusion}
In this paper, we investigated the inverse parameter and shape problem for an isotropic scatterer with two conductivity coefficients. This work considers an additional boundary parameter $\lambda$ and describes a novel method to recover the unknown scatterers. To achieve this, we developed a factorization of the far-field operator and then analyzed the operator to derive the new imaging function and showed that the imaging functional is stable with respect to noisy data. Furthermore, we addressed the uniqueness for recovering the coefficients from the known far-field data at a fixed incident direction for multiple frequencies. There are further questions to be explored for this scattering problem, such as: does the far-field data uniquely determine variable coefficients for the refractive index $n$ and boundary parameter $\eta$ when $\lambda$ is not fixed.\\

%
%

\noindent{\bf Acknowledgments:} The research of R. Ceja Ayala and I. Harris is partially supported by the NSF DMS Grant 2107891. 


\end{section}
\end{document}